\documentclass[11pt]{article}
\usepackage{amsmath,amssymb,amsthm}
\usepackage{graphicx}
\usepackage{booktabs}
\usepackage{placeins}
\usepackage{hyperref}
\usepackage{geometry}
\usepackage{microtype}
\usepackage{comment}
\newtheorem{proposition}{Proposition}

\theoremstyle{definition}

\newtheorem{example}{Example}
\theoremstyle{remark}
\newtheorem{remark}{Remark}
\newtheorem{conjecture}{Conjecture}

\newcommand{\dfn}{\stackrel{\triangle}{=}}
\newcommand{\reals}{\mathbb{R}}
\newcommand{\bE}{\mathbf{E}}
\newcommand{\bx}{\mathbf{x}}
\newcommand{\tbx}{\tilde{\mathbf{x}}}
\newcommand{\tbX}{\tilde{\mathbf{X}}}
\newcommand{\bX}{\mathbf{X}}

\newcommand{\calM}{\mathcal{M}}
\newcommand{\calQ}{\mathcal{Q}}
\newcommand{\calS}{\mathcal{S}}
\newcommand{\calX}{\mathcal{X}}
\newcommand{\Var}{\mathrm{Var}}
\newcommand{\Cov}{\mathrm{Cov}}
\newcommand{\MSE}{\mathrm{MSE}}

\title{\textbf{Grouped Reverse Importance Sampling for the Partition Function}}
\author{Neri Merhav\\[6pt]
\normalsize The Viterbi Faculty of Electrical and Computer Engineering\\
\normalsize Technion -- Israel Institute of Technology\\
\normalsize Technion City, Haifa 3200003, Israel\\
\normalsize \texttt{merhav@technion.ac.il}}
\date{}

\begin{document}
\maketitle
\thispagestyle{empty}

\begin{abstract}

We introduce and analyze several grouped variants of the method of reverse 
importance sampling (RIS) for estimating a partition function $Z(\beta)=\int e^{-\beta U(x)}\,\mbox{d}x$
from samples of the Boltzmann distribution $p(x)=e^{-\beta U(x)}/Z(\beta)$.
Ordinary RIS weighs each sample separately. By contrast, our proposed grouped RIS
(GRIS) methods are based on assigning the samples into groups (or
batches) of size $k\ge 2$ and applying a joint weight function to each group.
The focal point of the research is the quest for a tractable weight function that would
yield the smallest possible mean squared error (MSE). A simple identity 
relates the normalized MSE to the chi-squared divergence between the 
joint-weight distribution and the distribution of the $k$-fold sum 
of independent energies. 
Our first theoretical finding is that any weight that improves on ordinary RIS
($k=1$) must couple the group components. In other words, it must not be a
product-form function across those components,
as product-form weight functions always worsen the MSE.
Our second, and more important, finding
is that, without loss of optimality, it is sufficient
to seek weight functions that depend only on the total energy,
$\sum_iU(x_i)$, of the group (group-energy weight functions);
for the sliding-window variants, the analogous result is open.
This finding simplifies both the theoretical analysis and the application
of the method substantially.
For $k=2$ and $k=3$, the MSE associated with non-overlapping (NOL) groups is 
reduced by $20$--$65\%$ across three examples. 
We then propose two additional variants of GRIS, both
based on sliding-window grouping (as opposed to NOL
grouping). The first applies a fixed weight sliding window (FSW) across all (cyclic) shifts
of the sliding window, and the second allows a variable-weight sliding window
(VSW). The FSW scheme improves
on the NOL one, and the VSW improves 
even further, as will be demonstrated numerically.\\

\noindent
{\bf Index Terms:} importance sampling, reverse importance sampling, partition function, mean
squared error, chi-squared divergence.
\end{abstract}

\section{Introduction}
\label{sec:intro}

\subsection{A Few Words of Background}

Computing the partition function (or the normalizing constant) of a Boltzmann
distribution is a fundamental task in statistical physics, Bayesian
statistics, and machine learning.  It arises whenever one needs to evaluate
the probability of evidence, compare models, or compute thermodynamic
quantities such as free energy, average energy, pressure, etc.  The partition function $Z(\beta)$ is the
integral of the Boltzmann factor $e^{-\beta U(x)}$ over the
state space, where $U(x)$ is the Hamiltonian (energy function) 
associated with a microstate $x\in\calX$ and the parameter $\beta$ is called the inverse
temperature.  In virtually all non-trivial applications, this integral is
analytically intractable with no apparent closed form representation, especially
when the state space $\calX$ is high-dimensional and
the energy landscape is complicated. Consequently, one frequently resorts to Monte Carlo methods.

Reverse importance sampling (RIS)
is a natural approach when one already has access to samples from the
Boltzmann distribution, $p(x)=e^{-\beta U(x)}/Z(\beta)$.
The idea is to draw $n$ independent samples, $x_1,\ldots,x_n$, from $p(\cdot)$
and reweight each one by a function $m(x)$ chosen so that its own
normalizing constant, $M = \int_{\calX} m(x)dx$, is computable
in closed form, or at least easy to calculate numerically by integration over
a space with low dimension. The resulting estimate of
$\ln Z(\beta)$ turns out to be $\ln M$ minus the logarithm of the sample mean
of the importance weights $m(x_i)e^{\beta U(x_i)}$, $i=1,2,\ldots,n$.
The estimator is consistent for any positive weight function $m$,
and its mean squared error (MSE) depends on how well $m(x)$
approximates the ideal weight function $m^*(x)=e^{-\beta U(x)}$,
which gives zero MSE, but is impractical since it
would require computing $M=\int_{\calX} m^*(x)\,\mbox{d}x=Z(\beta)$,
which we presumably cannot calculate in the first place.
The design problem is therefore to choose $m$ to minimize the MSE across a certain family 
of functions that lend themselves to reasonably easy evaluation of $M$ -- henceforth 
referred to as tractable functions. The MSE formula is well-known to be
representable in terms of the chi-squared divergence between two disributions, one of
which is associated with the choice of $m$ and the other is derived from the
Boltzmann disribution $p$.

\subsection{Main Contributions}

In this paper, we focus on the following question: \emph{can one do better by
grouping the $n$ samples into batches of size $k$ and applying a
single joint weight to each batch?}
We call this \emph{grouped RIS} (GRIS). The idea is that the $k$ samples in each
group are processed jointly, in other words, the GRIS estimator
assigns a single weight $m(\bx)$, $\bx\in\calX^k$, to each group and estimates
$\ln Z^k(\beta)=k\ln Z(\beta)$. Thus, division by $k$ gives an estimate
of $\ln Z(\beta)$.
At first glance, one may wonder: how can grouping possibly help when 
the samples are statistically independent?
The key idea is that a joint weight function allows a significantly richer freedom
for optimization that cannot be exploited by weight functions that operate on individual samples.
From the perspective of information theorists, this is 
similar to the fact that source- and channel coding in long blocks
outperforms scalar coding even when the system is memoryless.

Stated briefly, we develop three successive tiers of improvement, each building
on the previous. First, we focus on \emph{non-overlapping} (NOL) grouping: partition
the given set of
$n$ samples into $n/k$ disjoint groups, each of size $k$, and apply a joint
weight to each one. For $k=2$ and $k=3$, the MSE associated with NOL groups is
reduced by $20$--$65\%$ relative to ordinary RIS ($k=1$), across three examples.
Next, we consider a \emph{sliding-window} variant: form $n$ overlapping groups
by cyclically sliding a window of size $k$, one step at a time, using the same
weight function throughout. We refer to this variant of GRIS as {\em fixed-weight
sliding-window} (FSW) grouping.
We furnish a sufficient condition
on the overlap correlations under which this FSW scheme reduces the MSE further,
and verify the condition across the three examples with different energy
functions (Sections~\ref{sec:overlap}--\ref{sec:numerics}).
In all three examples the condition is satisfied and the sliding window
achieves an additional $13$--$14\%$ reduction in MSE over non-overlapping
grouping. Third, we propose a \emph{variable-weight sliding window} (VSW)
variant of GRIS:
cycle through $k$ different weight functions --- one per group position in
every cycle. This yields a further gain of $30$--$40\%$ over the FSW,
as demonstrated in the numerical examples.

In order to obtain the above improvements, we first have to
build the theoretical basis of this work. To this end, we establish the
following general results that set the stage for providing guidelines
regarding the choice of good weight functions for NOL GRIS.\\

\noindent
1.~{\em Coupling is necessary for NOL GRIS.}
A necessary condition for a joint weight to improve on the $k=1$
baseline (standard RIS) is that it must genuinely create coupling among the
group components: the weight
must depend on a group, say, $\bx=(x_1,\ldots,x_k)$ in a way that does not factor as a
product of the individual weights. Concretely,
we show that any product-form weight
function actually \emph{worsens} the MSE relative to $k=1$, so coupling is
not merely helpful but essential. This motivates the search for
joint weights that create dependence between the group components.\\

\noindent
2.~{\em Group-energy weight functions are sufficient without loss of optimality
for NOL GRIS.}
We prove (Section~\ref{sec:ge_optimal}) that for the NOL estimator,
for any positive weight function $m(x_1,\ldots,x_k)$, there exists a
group-energy weight function $\tilde m(U(x_1)+\cdots+U(x_k))$ with MSE
no larger than that of $m(x_1,\ldots,x_k)$: the group energy plays the
role of a sufficient statistic.
We use group-energy weights for the FSW and VSW variants as well,
for tractability; the analogous sufficiency result in those settings
is an open problem (Section~\ref{sec:conclusions}).\\

\noindent
3.~{\em Dimension reduction.} An additional benefit of item 2 above, 
is that it enables one to reduce the calculation of the normalizing constant $M$ from a $k$-dimensional 
integration to a one-dimensional integral
independently of $k$. This means that even if
$M$ has no apparent closed-form
expression, it can often still be evaluated at least numerically at a computational effort that
does not depend on $k$, most commonly -- by integration in one
dimension. To this end, we need to evaluate the measure of the set of
vectors $(x_1,\ldots,x_k)$ whose group energy, $U(x_1)+\cdots+U(x_k)$,
equals a given value $u$, a purely geometric quantity depending only
on the energy function $U$, which we call the
\emph{density of states} $\Omega_k(u)$, a term borrowed from the
realm of statistical physics. Computing $\Omega_k(u)$ is possible
in many cases of interest, as discussed in Section~\ref{sec:ge_optimal}.
Moreover, upon passing to integration in one
dimension, it may suffice to parametrize the weight function by a relatively small number of
parameters to be optimized.\\

\noindent
4.~{\em When is GRIS genuinely useful?}
An observant reader may notice a potential circularity: if the
density of states $\Omega_k(u)$ is tractable, one can in
principle compute $Z(\beta)$ directly from it via the one-dimensional
integral, $Z(\beta)=\int_0^\infty e^{-\beta u}\Omega_k(u)\mbox{d}u$ (a Laplace
transform), and then RIS and GRIS are not needed in the first place. The genuine use case for GRIS
is therefore the \emph{perturbed Hamiltonian} setting, where
$U(x)=U^\star(x)+\epsilon V(x)$ ($\epsilon$ being a small parameter), and where
the density of states of the main term
$U^\star(\cdot)$ is tractable and serves as the basis for the weight function,
but the density of states of the full energy function $U(\cdot)$ is not --- for
instance when $V(x)$ introduces complicated cross-terms that couple coordinates.
In this setting, $Z(\beta)$ is genuinely intractable, yet GRIS with a
main-term weight provides a consistent estimate at a controlled MSE
cost. To keep the theoretical analysis transparent, the first three
numerical examples in Section~\ref{sec:numerics} are \emph{toy examples}
--- one-dimensional cases where $Z(\beta)$ is in fact tractable ---
chosen solely as controlled experiments that allow exact MSE analysis for
verification without Monte Carlo noise. They can also be viewed as
describing the limiting behavior of the perturbed Hamiltonian setting
as $\epsilon\to 0$: when the perturbation vanishes, $Z(\beta)$ becomes
tractable and the toy examples are recovered exactly. The method,
however, works for every value of $\epsilon$, not merely very small
$\epsilon>0$. The genuinely intractable
setting is discussed in Section~\ref{sec:perturbed}.\\

We also study grouped forward importance sampling (FIS), where samples are drawn from a
proposal distribution rather than from $p(\cdot)$ (Appendix~A).
The same chi-squared structure appears, but the sliding window fails:
in RIS the weight function and the sampling distribution are independent
(samples come from $p(\cdot)$ regardless of $m(\cdot)$), so overlapping groups
are unbiased; in FIS the proposal generates the samples and
hence the weight depends on the same draw, so overlapping groups
with a non-product proposal are necessarily biased.
This structural asymmetry explains why GRIS is the
more natural and complete theory.\\

\subsection{Related Work}

A few words on relevant earlier work are in order.\\

\noindent
1.~{\em RIS and related estimators.}
Drawing samples from the target and reweighting to estimate a normalizing
constant according to the RIS paradigm is intimately related to the
harmonic mean estimator of Newton and Raftery~\cite{NewtonRaftery1994}, which is
the special case where the weight is the reciprocal of the unnormalized
density; it is notoriously unstable because the resulting importance weights
may have infinite variance. Bridge sampling~\cite{MengWong1996} uses samples from
two distributions simultaneously to estimate a ratio of normalizing constants.
Discriminance sampling~\cite{LiuFang2015} converts partition function
estimation into a classification problem; this is also the paper where
the $k=1$ chi-squared interpretation of the RIS variance is first
stated explicitly.  Both methods apply weights to individual samples rather than to
groups, and neither considers joint weight functions.\\

\noindent
2.~{\em Annealed importance sampling (AIS).}
AIS~\cite{Neal2001} constructs a sequence of intermediate distributions
bridging a tractable reference to the target, and runs a Markov chain
through them to accumulate an importance weight. Unlike GRIS,
AIS does not require samples from $p(\cdot)$: it is designed for settings where
direct sampling from $p(\cdot)$ is infeasible. The two methods are therefore
complementary rather than competing; we provide a detailed comparative review in
Appendix~B.\\

\noindent
3.~{\em Importance weighted autoencoders (IWAE).}
The IWAE~\cite{Burda2015} processes $k$ i.i.d.\ draws from a recognition
network jointly by taking the logarithm of their average importance weight.  The
gain over a single draw comes from Jensen's inequality applied to the
logarithm, not from a non-product joint weight: each individual weight is
still computed separately.  The IWAE gain and the GRIS gain are
therefore distinct mechanisms, though both exploit the nonlinearity of the
logarithmic function.\\

\noindent
4.~{\em Path sampling and thermodynamic integration.}
Gelman and Meng~\cite{GelmanMeng1998} give a unified treatment of IS,
bridge sampling, and path sampling (thermodynamic integration) for
normalizing constants, showing that all three are connected by a common
identity. Path sampling computes $\ln[Z(\beta)/Z(\beta_0)]$ as
$\int_{\beta_0}^\beta\bE_t[U(X)]\,\mbox{d}t$, which is exact but requires
evaluating the expected energy at every intermediate temperature --- a
different computational regime from GRIS, which needs only samples at a
single $\beta$.\\

\noindent
5.~{\em Sequential Monte Carlo (SMC).}
SMC samplers~\cite{DelMoralDoucetJasra2006} propagate a cloud of
weighted particles through a sequence of distributions, providing
consistent estimates of normalizing constants along the way.  Like AIS,
SMC does not require samples from $p(\cdot)$ and is designed for settings
where direct sampling from $p(\cdot)$ is infeasible; it can be seen as a
resampled, adaptive version of AIS. GRIS is complementary: it operates
in the regime where $p(\cdot)$ is directly samplable.\\

\noindent
6.~{\em Variance reduction for IS.}
Owen and Zhou~\cite{OwenZhou2000} study FIS for computing
expectations $\int f(x)p(x)\,\mbox{d}x$ and show that using a mixture of
proposal distributions as control variates bounds the asymptotic
variance by a small multiple of the best single-proposal variance.
Their setting (combining samples from several proposals to estimate
an expectation) is different from GRIS (grouping i.i.d.\ samples from a single Boltzmann
distribution to estimate a normalizing constant), but the shared theme
is systematic variance reduction through a richer function class.
The standard reference for IS, Markov Chain Monte Carlo (MCMC), and related Monte Carlo methods
is Robert and Casella~\cite{RobertCasella2004}.\\

\noindent
7.~{\em Coupled self-normalized IS.}
Branchini and Elvira~\cite{BranchiniElvira2024} reduce the variance of a
self-normalized IS ratio by coupling numerator and denominator samples via
a joint proposal over two draws.  Their setting is a ratio of two
expectations; ours is a single unnormalized integral, and our grouping
operates over $k$ draws contributing to a single estimate rather than
coupling numerator and denominator.\\

\noindent
8.~{\em Multiple importance sampling and $k$-tuple methods.}
Multiple IS~\cite{Veach1995,Elvira2019} combines samples from several
different proposals; its ``grouping'' refers to grouping proposals, not
draws from a single distribution. Some combinatorial methods for polymer partition functions
select $k$ distinct microstates jointly,
but their motivation is combinatorial and the draws are not i.i.d.\\

To the best of our knowledge, GRIS --- i.i.d.\ draws from a single Boltzmann
distribution, grouped with a joint weight, guided by some theoretical
principles, as described
above, has not been studied
previously.

\section{The GRIS Estimator with Non-Overlapping Grouping}
\label{sec:setup}

\subsection{Ordinary RIS}

To set the stage for background, we begin by introducing the ordinary form of
RIS. Let $x_1,\ldots,x_n$ be independent samples drawn from
the Boltzmann distribution
\begin{equation}\label{eq:boltzmann}
  p(x) = \frac{e^{-\beta U(x)}}{Z(\beta)}, \qquad
  Z(\beta) = \int_\calX e^{-\beta U(x)}\mbox{d}x,
\end{equation}
where $\calX$ designates the alphabet of $x$.
The goal is to estimate $\ln Z(\beta)$
in cases where it is not available in closed form.
The basic RIS estimator is based on selecting a positive \emph{weight function}
$m:\calX \to (0,\infty)$ whose normalizing constant $M=\int_\calX m(x)\mbox{d}x$ is available
in closed form, or at least by reasonably easy numerical integration. Upon selecting such a weight function, the ordinary RIS
estimator is defined by
\begin{equation}\label{eq:IS_est}
  \ln\hat{Z}(\beta) = \ln M - \ln\!\left[\frac{1}{n}
  \sum_{i=1}^n m(x_i)e^{\beta U(x_i)}\right].
\end{equation}
This estimator is strongly consistent because $\bE\{m(X)e^{\beta U(X)}\}=M/Z(\beta)$,
where the expectation is with respect to (w.r.t.) $p(\cdot)$,
and so, $\ln Z(\beta)=\ln M- \ln[\bE\{m(X)e^{\beta U(X)}\}]$, which implies
that $\ln\hat{Z}(\beta)\to\ln Z(\beta)$ as $n\to\infty$ almost surely by the strong
law of large numbers.

Since the MSE of such estimators decays at rate $1/n$, we define the
\emph{asymptotic MSE constant} as the limit
$\lim_{n\to\infty}n\cdot\MSE\{\ln\hat{Z}(\beta)\}$,
where 
\begin{equation}
\MSE\{\ln\hat{Z}(\beta)\}\dfn\bE\{[\ln\hat{Z}(\beta)-\ln Z(\beta)]^2\},
\end{equation}
whenever it exists.
For the ordinary RIS estimator \eqref{eq:IS_est}
with a given weight function $m$, a direct
second-moment calculation (see Appendix~C) yields
\begin{equation}\label{eq:mse1}
  V_1(m) \dfn \lim_{n\to\infty}n\cdot\MSE\left\{\ln\hat{Z}(\beta)\right\} = Q_1(m) - 1,
\end{equation}
where $Q_1(m)$ is the \emph{second-moment ratio}
defined by
\begin{equation}\label{eq:Q1}
  Q_1(m) \dfn
  \frac{Z(\beta)\cdot\int_\calX m^2(x)e^{\beta U(x)}\mbox{d}x}
  {\Bigl[\int_\calX m(x)\mbox{d}x\Bigr]^2}.
\end{equation}
The Cauchy--Schwarz inequality implies that $Q_1(m)\ge 1$, with equality iff $m(x)\propto e^{-\beta U(x)}$.
The weight function $m^*(x)=e^{-\beta U(x)}$ will therefore be referred to as
the \emph{ideal weight function}: It achieves zero MSE, but is
infeasible since its normalizing constant
$M^*=\int_{\calX}m^*(x)\mbox{d}x=Z(\beta)$, which is postulated
to be unavailable in closed form in the first place (otherwise, RIS is not
needed anyway).
The RIS design problem is therefore to choose $m$ so as to minimize $Q_1(m)$
from a family of tractable functions, say, a parametric family, such as
$m(x)=e^{-(x-\mu)^2/(2\sigma^2)}$, where $\mu$ and $\sigma^2$ are parameters
and then $M$ is well known to have the simple closed-form
expression, $\sqrt{2\pi\sigma^2}$, provided that $\calX$ is the entire real line. Intuitively,
among all members of our family of tractable functions, we seek the one that
best approximates the ideal weight function $m^*(x)$.

\subsection{Moving on to Groups of Size $k$}

In~\eqref{eq:IS_est}, each sample $x_i$ is weighed separately. This paper
focuses on the following question:
\emph{can one reduce the MSE by processing the
samples in groups of size $k> 1$ rather than weighing each one separately?}

At first glance, the reader may speculate that the answer is necessarily negative: The samples are
statistically independent, so no rearrangement or regrouping can create
new statistical information about $Z(\beta)$. Whatever can be learned
from $x_1,\ldots,x_n$ individually can also be learned from them
collectively, and vice versa. 

While this intuition is conceptually correct as far as the
information content is concerned, it nevertheless overlooks the simple fact that
joint processing of several samples introduces significantly more degrees of
freedom and allows a substantially richer class of weight
functions that potentially better
approximate the corresponding ideal group weight $e^{-\beta[U(x_1)+\cdots+U(x_k)]}$.
The same principle underlies the elements of information theory. For a
memoryless source or channel, successive symbols are independent, yet
block coding over $k$ symbols provably achieves rates, distortions,
and error exponents strictly unattainable on a symbol-by-symbol basis.

The extension of the basic RIS estimator to its grouped version (GRIS) is
conceptually straightforward. Instead of considering $x_1,\ldots,x_n$ as $n$
separate samples, let us view them as $n/k$
vector samples of dimension $k$, generated by grouping, and proceed in the same manner as above.
Specifically, given $n$ i.i.d.\ samples from $p(\cdot)$, 
and given a fixed positive integer $k$ that divides $n$,
partition the data into $n/k$
non-overlapping blocks, or groups, 
\begin{equation}
\bx_j=(x_{jk+1},\ldots,x_{jk+k}),~~
j=0,1,\ldots,\frac{n}{k}-1, 
\end{equation}
and let 
\begin{equation}
U_k(\bx_j)\dfn\sum_{i=1}^k U(x_{jk+i})
\end{equation}
be the
\emph{group energy}. Apply a joint weight $m:\calX^k\to(0,\infty)$ whose
normalizing constant is now $M=\int_{\calX^k}m(\bx)\mbox{d}\bx$, which is assumed calculable
in closed form or by relatively easy numerical integration. The \emph{GRIS estimator} is
defined as
\begin{equation}\label{eq:groupest}
  \ln\hat{Z}^{\rm nol}_k(\beta)
  = \frac{1}{k}\left[\ln M - \ln\left(\frac{k}{n}
    \sum_{j=0}^{n/k-1} m(\bx_j)e^{\beta U_k(\bx_j)}\right)\right],
\end{equation}
where the subscript ``nol'' stands for ``non-overlapping'' in order to
distinguish this estimator from other variants of GRIS that allow overlaps (to
be discussed in the sequel).
As before, since $\bE\{m(\bX)e^{\beta U_k(\bX)}\}=M/Z^k(\beta)$, the
expression on the square brackets of eq.\ \eqref{eq:groupest} (without the
factor $\frac{1}{k}$) estimates $\ln Z^k(\beta)$, and so, upon dividing by
$k$, we obtain
an estimate of $\ln Z(\beta)$.

Let $W_j = m(\bx_j)e^{\beta U_k(\bx_j)}$ and
$\bE\{W_j\} = M/Z^k(\beta)$.
The \emph{group second-moment ratio} is
\begin{equation}\label{eq:Qk}
  Q_k(m) \dfn
  \frac{Z^k(\beta)\cdot\int_{\calX^k}m^2(\bx)e^{\beta
U_k(\bx)}\mbox{d}\bx}
  {\Bigl[\int_{\calX^k}m(\bx)\mbox{d}\bx\Bigr]^2}
  = 1 + \frac{\Var\{W_1\}}{[\bE\{W_1\}]^2},
\end{equation}
where the second equality is verified in Appendix~C.
A direct second-moment calculation (see again Appendix~C)
yields
\begin{equation}\label{eq:mse_derivation}
  V_k^{\rm nol}(m) \dfn \lim_{n\to\infty}n\cdot\MSE\left\{\ln\hat{Z}^{\rm nol}_k(\beta)\right\}
  = \frac{Q_k(m)-1}{k}.
\end{equation}

Again, $Q_k(m)\ge 1$, with equality iff $m(\bx)=m^*(\bx)\propto e^{-\beta U_k(\bx)}$.
GRIS improves on the $k=1$ baseline RIS estimator whenever  
$[Q_k(m)-1]/k < Q_1(m_0)-1$ for some tractable joint weight $m$
and the best tractable single-sample weight $m_0$.
The trade-off is transparent: with $n/k$ groups, instead of $n$, the
variance of the sample mean is $k$ times larger (fewer independent
terms), but $Q_k(m)-1$ may decrease fast enough with $k$ to more than
compensate.

\subsection{General Guidelines for the Choice of Weight Functions}

Conceptually, the simplest joint weight functions have the product-form,
$m(\bx)=\prod_{i=1}^k m_0(x_i)$ for some $m_0:\calX\to(0,\infty)$: they assign independent weights
to each group component and thus introduce no coupling.
One might hope that since the components of $\bx$ are i.i.d., such weight
functions are optimal, or at least very good.
The following
proposition shows this hope is illusory. In fact, product-form weight
functions are {\em strictly} worse than their single-sample counterparts unless
they both yield zero MSE using the ideal weight function.

\begin{proposition}\label{prop:product}
Let $m(\bx)=\prod_{i=1}^k m_0(x_i)$ be any product-form weight function.
Then $Q_k(m)=[Q_1(m_0)]^k$ and
  \begin{equation}\label{eq:product_bound}
    \frac{Q_k(m)-1}{k} = \frac{[Q_1(m_0)]^k-1}{k}\ge Q_1(m_0)-1,
  \end{equation}
with equality iff $Q_1(m_0)=1$.
\end{proposition}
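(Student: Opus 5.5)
The plan has two parts: first show that $Q_k$ factorizes for product-form weights, then reduce the inequality to an elementary fact about $t\mapsto t^k$.

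\emph{Factorization.} Substitute $m(\bx)=\prod_{i=1}^k m_0(x_i)$ into the definition \eqref{eq:Qk} of $Q_k(m)$. Since $U_k(\bx)=\sum_i U(x_i)$, we have $e^{\beta U_k(\bx)}=\prod_i e^{\beta U(x_i)}$, so both integrands factor over coordinates. By Fubini--Tonelli (all integrands are nonnegative),
\[
\int_{\calX^k} m^2(\bx)e^{\beta U_k(\bx)}\mbox{d}\bx=\Bigl[\int_\calX m_0^2(x)e^{\beta U(x)}\mbox{d}x\Bigr]^k
\quad\text{and}\quad
\int_{\calX^k} m(\bx)\mbox{d}\bx=\Bigl[\int_\calX m_0(x)\mbox{d}x\Bigr]^k .
\]
The prefactor $Z^k(\beta)$ is already a $k$th power. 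Comparing with \eqref{eq:Q1} then gives $Q_k(m)=[Q_1(m_0)]^k$. If the second-moment integral diverges, both sides equal $+\infty$ and the claim holds trivially, so we may assume $Q_1(m_0)<\infty$.

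\emph{The inequality.} Set $t=Q_1(m_0)$, which satisfies $t\ge1$ by the Cauchy--Schwarz remark after \eqref{eq:Q1}. The claim $(t^k-1)/k\ge t-1$ follows in either of two ways.
\begin{itemize}
\item Factor $t^k-1=(t-1)(1+t+\cdots+t^{k-1})$. Each of the $k$ terms in the second factor is at least $1$, so the sum is at least $k$.
\item Alternatively, use Bernoulli's inequality $t^k=(1+(t-1))^k\ge1+k(t-1)$, or the convexity of $t\mapsto t^k$.
\end{itemize}

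\emph{Equality.} For $k\ge2$, consider the factored form. Equality requires either $t-1=0$, or $1+t+\cdots+t^{k-1}=k$, and the latter forces $t=1$ because $t^{k-1}>1$ whenever $t>1$. Hence equality holds iff $Q_1(m_0)=1$. For $k=1$ the two sides coincide trivially, so the ``iff'' should be read for $k\ge2$; I would state this explicitly.

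There is no genuine obstacle here. The only points needing care are justifying the use of Fubini for possibly infinite integrals, which is handled by nonnegativity, and the trivial $k=1$ edge case in the equality statement.
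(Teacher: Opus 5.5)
Your proposal is correct and follows the paper's proof essentially verbatim: factorize both integrals in \eqref{eq:Qk} to get $Q_k(m)=[Q_1(m_0)]^k$, then write $t^k-1=(t-1)\sum_{i=0}^{k-1}t^i$ and bound each $t^i\ge 1$ using $t=Q_1(m_0)\ge 1$. Your explicit treatment of the equality clause is a small improvement, since the paper never argues it. As you note, for $k=1$ equality holds for every $m_0$, so the ``iff'' is valid only for $k\ge 2$. By the same reasoning it is valid only for finite $Q_1(m_0)$, because in the divergent case both sides equal $+\infty$.
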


\begin{proof}
Since $m(\bx)=\prod_i m_0(x_i)$ and $e^{\beta U_k(\bx)}=\prod_i e^{\beta U(x_i)}$,
the integrals in~\eqref{eq:Qk} factorize and the relation $Q_k(m)=[Q_1(m_0)]^k$
is readily obtained. Now,
\begin{eqnarray}
\frac{Q_k(m)-1}{k}&=&\frac{[Q_1(m_0)]^k-1}{k}\nonumber\\
&=&\frac{Q_1(m_0)-1}{k}\cdot\sum_{i=0}^{k-1}[Q_1(m_0)]^i\nonumber\\
&\ge&\frac{Q_1(m_0)-1}{k}\cdot\sum_{i=0}^{k-1} 1^i\nonumber\\
&=&Q_1(m_0)-1.
\end{eqnarray}
\end{proof}

Proposition \ref{prop:product} provides our first guideline in the choice of
group-weight functions: For any
improvement from grouping, it is necessary to introduce genuine intra-group coupling,
as the joint weight must \emph{not} factor over individual samples.
This is a {\em negative} guideline - it tells us about one kind of functions to be
ruled out in our quest for good weight functions.

The next proposition suggests a {\em positive} guideline for selecting good weight
functions. It allows us to confine attention to a much smaller and more structured class of weight
functions without loss of optimality. In particular, it tells us that
weight functions that depend on $\bx$ only via its group energy $U_k(\bx)$ are
sufficient. In other words, we may consider only functions of the form
$m(\bx)=\tilde m(U_k(\bx))$ without worrying that we may compromise
performance. We refer
to this type of weight functions as {\em group-energy} weight functions. As
will be discussed shortly, this
simplifies substantially both the optimization and the implementation, since the outer
function $\tilde{m}$ acts on a scalar variable, independently of $k$.

\begin{proposition}\label{prop:ge_optimal}\label{sec:ge_optimal}
  For any weight function $m(\bx)$, there exists a group-energy
  weight function $\tilde m(U_k(\bx))$ whose MSE is no larger
  than that of $m(\bx)$.
\end{proposition}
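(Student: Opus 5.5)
Since the asymptotic MSE of the NOL estimator is $[Q_k(m)-1]/k$ by \eqref{eq:mse_derivation}, it suffices to find, for every positive $m$, a group-energy function $\tilde m$ with $Q_k(\tilde m)\le Q_k(m)$. The natural construction is to average $m$ over each level set of the group energy. Define the density of states $\Omega_k(u)$ as the density of the Lebesgue measure of $\{\bx:U_k(\bx)\le u\}$ in $u$. Let $\mu_u$ be the normalized surface (disintegration) measure of Lebesgue measure on $\calX^k$ along the fibers $\{U_k(\bx)=u\}$, so that $\int_{\calX^k}f(\bx)\mbox{d}\bx=\int \Omega_k(u)\int f\,\mbox{d}\mu_u\,\mbox{d}u$. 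Then set
\begin{equation*}
\tilde m(u)\dfn\int m\,\mbox{d}\mu_u ,
\end{equation*}
that is, the conditional average of $m$ over the energy shell at level $u$.

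The first step is to check that the normalizing constant is preserved. By the coarea/disintegration formula,
\begin{equation*}
\int_{\calX^k}\tilde m(U_k(\bx))\,\mbox{d}\bx=\int\Omega_k(u)\tilde m(u)\,\mbox{d}u=\int_{\calX^k}m(\bx)\,\mbox{d}\bx=M,
\end{equation*}
so the denominator of \eqref{eq:Qk} is unchanged. The same formula is also what makes $\tilde m$ tractable, since $M$ reduces to a one-dimensional integral.

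The second step treats the numerator, which is
\begin{equation*}
\int m^2 e^{\beta U_k}\,\mbox{d}\bx=\int\Omega_k(u)e^{\beta u}\int m^2\,\mbox{d}\mu_u\,\mbox{d}u.
\end{equation*}
The factor $e^{\beta U_k}$ is constant on each fiber, which is exactly why the group energy acts as a sufficient statistic. Jensen's (or Cauchy--Schwarz's) inequality on each fiber gives $\int m^2\,\mbox{d}\mu_u\ge\tilde m(u)^2$, with equality iff $m$ is $\mu_u$-a.e.\ constant on the fiber. Hence the numerator for $\tilde m$ is no larger than that for $m$, and $Q_k(\tilde m)\le Q_k(m)$ follows.

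An equivalent probabilistic phrasing avoids the geometry: $\tilde m(U_k(\bX))=\bE\{m(\bX)\mid U_k(\bX)\}$, where $\bX$ has density $e^{-\beta U_k}/Z^k$. Indeed, under the Boltzmann law the conditional law given $U_k=u$ is uniform on the shell. Then $\tilde W=\bE\{W\mid U_k\}$, so $\bE\{\tilde W\}=\bE\{W\}$ and $\Var\{\tilde W\}\le\Var\{W\}$ by the law of total variance. Applying the second form of \eqref{eq:Qk} finishes the proof; this is Rao--Blackwellization.

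The main obstacle is rigor in defining the fiber measure, which requires $\Omega_k$ to exist and $U$ to be regular enough. The probabilistic version sidesteps this by using conditional expectations, which always exist. One then needs to verify that the conditional expectation of $m$ under $p^{\otimes k}$ given $U_k$ is a measurable function of $U_k$ alone. The remaining concerns are positivity and finiteness: $\tilde m>0$ is immediate, and the case $Q_k(m)=\infty$ is trivial.
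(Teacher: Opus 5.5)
Your proposal is correct and follows the paper's proof essentially step for step. You use the same level-set average (your $\int m\,\mbox{d}\mu_u$ is the paper's $\int_{\calS(u)}m(\bx)\mbox{d}\bx/\Omega_k(u)$), the co-area identity to get $\tilde M=M$, and Jensen on each fiber for the numerator of $Q_k$. Your conditional-expectation rephrasing ($\tilde W=\bE\{W\mid U_k\}$ plus the law of total variance) is the Rao--Blackwell mechanism the paper mentions only in its remark, and it gives a cleaner rigorous version because it does not depend on a hypersurface-area interpretation of $\Omega_k$.
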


The proof below uses the notion of the group \emph{density of states}
\begin{equation}\label{eq:Omega}
  \Omega_k(u) \dfn \frac{d}{du}\mathrm{Vol}\bigl\{\bx\in \calX^k:~U_k(\bx)\le u\bigr\},
\end{equation}
that is, the derivative (or the density) of the volume of the sub-level set with respect to the energy
level $u$. Equivalently, $\Omega_k(u)$ is the $(k-1)$-dimensional
hypersurface area of the level set
\begin{equation}
\calS(u)\dfn\{\bx\in \calX^k: U_k(\bx)=u\}.
\end{equation}
From the probabilistic perspective,
this means that the uniform probability measure across $\calS(u)$ has
density $1/\Omega_k(u)$ with respect to the surface measure on
$\calS(u)$. Assuming $U_k(\bx)\ge 0$ for all $\bx\in\calX^k$, the
following co-area identity holds for any integrable function $f$:
\begin{equation}\label{eq:coarea}
  \int_{\calX^k} f(U_k(\bx))\mbox{d}\bx = \int_0^\infty f(u)\Omega_k(u)\mbox{d}u.
\end{equation}
In particular,
\begin{equation}
[Z(\beta)]^k=\int_{\calX^k}e^{-\beta U_k(\bx)}\mbox{d}\bx=\int_0^\infty e^{-\beta
u}\Omega_k(u)\mbox{d}u,
\end{equation}
which means that $[Z(\beta)]^k$ is the Laplace transform of $\Omega_k(u)$~\cite{Huang1987,Pathria2011},
provided that the variable $\beta$ and the function $Z(\beta)$ are both
extended to the entire complex plane.
Consequently, $\Omega_k(u)$ can be obtained as the inverse Laplace transform
of $[Z(\beta)]^k$. 

\begin{proof}
Define
\begin{equation}\label{eq:mbar}
\tilde m(u) \dfn
\frac{\int_{\calS(u)}m(\bx)\mbox{d}\bx}{\Omega_k(u)},
\end{equation}
which depends on $\bx$ only through $U_k(\bx)=u$ and is therefore
a group-energy weight. We show it has the same normalizing constant
as $m(\bx)$ and a numerator that cannot be larger.
\begin{eqnarray}\label{eq:Mtilde}
\tilde M&=&
\int_{\calX^k}\tilde m(U_k(\bx))\mbox{d}\bx\nonumber\\
&=&\int_0^\infty\tilde m(u)\Omega_k(u)\mbox{d}u\nonumber\\
&=&\int_0^\infty \mbox{d}u\int_{\calS(u)} m(\bx)\mbox{d}\bx\nonumber\\
&=&\int_{\calX^k} m(\bx)\mbox{d}\bx\nonumber\\
&=&M,
\end{eqnarray}
where the second equality is due to eq.\ \eqref{eq:coarea} applied to $f(\cdot)=\tilde{m}(\cdot)$,
and the third equality is due to the definition \eqref{eq:mbar}. Thus,
the denominator $M^2$ of $Q_k(\cdot)$ is unchanged.
Now, since $e^{\beta U_k(\bx)}=e^{\beta u}$ is constant across $\calS(u)$,
we can factor it out of the inner integral:
\begin{eqnarray}\label{eq:numerator_chain}
\int_{\calX^k} m^2(\bx)e^{\beta U_k(\bx)}\mbox{d}\bx
&=&\int_0^\infty e^{\beta u}
\int_{\calS(u)} m^2(\bx)\mbox{d}\bx\mbox{d}u\nonumber\\
&=&\int_0^\infty e^{\beta u}
\frac{\int_{\calS(u)} m^2(\bx)\mbox{d}\bx}{\Omega_k(u)}
\cdot\Omega_k(u)\mbox{d}u\nonumber\\
&\ge&\int_0^\infty e^{\beta u}
\left[\frac{\int_{\calS(u)} m(\bx)\mbox{d}\bx}{\Omega_k(u)}\right]^2
\cdot\Omega_k(u)\mbox{d}u\nonumber\\
&=&\int_0^\infty e^{\beta u}\tilde{m}^2(u)\Omega_k(u)\mbox{d}u,
\end{eqnarray}
where the inequality is Jensen's inequality applied to the convex
function $t\mapsto t^2$.
Since the numerator of $Q_k(\tilde m)$ is no larger than that of
$Q_k(m)$, and both share the same denominator $M^2$,
we conclude $Q_k(\tilde m)\le Q_k(m)$.\qed
\end{proof}

\begin{remark}[Rao--Blackwell connection]
Proposition \ref{prop:ge_optimal} has the flavor of the Rao--Blackwell
theorem \cite{vanderVaart1998}: the group energy $U_k(\bx)$
plays the role of a sufficient statistic, and the proof shows that
conditioning the weight on $U_k(\bx)$ (averaging $m(\bx)$ over the
level set $\calS(U_k(\bx))$) never increases the MSE.
The analogy is not exact --- the classical Rao--Blackwell theorem
applies to unbiased estimators and uses Fisher--Neyman sufficiency
--- but the mechanism is the same: Jensen's inequality applied to
$t\mapsto t^2$ shows that the level-set average $\tilde m(u)$
achieves a smaller numerator in $Q_k$ for the same denominator $M$.
\hfill$\lozenge$
\end{remark}

As mentioned briefly before, Proposition~\ref{prop:ge_optimal} is significant in two respects. First, it
reduces dramatically the space of functions in which one should seek good
weight functions, and thereby simplifies the optimization substantially.
Secondly, it reduces significantly the effort needed in order to calculate the
normalization constant $M$ needed for the estimator. Instead of a
$k$-dimensional integral over $\calX^k$, we simply have a one-dimensional integral
over the positive reals. 
\begin{equation}\label{eq:Mk_1d_body}
M=\int_{\calX^k} \tilde m(U_k(\bx))\mbox{d}\bx
=\int_0^\infty \tilde m(u)\Omega_k(u)\mbox{d}u.
\end{equation}
We have, however, to be able to derive
the density of states $\Omega_k(u)$.
This is a purely geometric quantity whose derivation is
given in Appendix~D; the key facts are the following.
\begin{itemize}
  \item For $k=1$: $\Omega_1(u)$ is given by the co-area formula
    (equation~\eqref{eq:Omega1_coarea} in Appendix~D).
  \item For $k\ge 2$: $\Omega_k = \Omega_1^{*(k)}$,
    the $(k-1)$-fold convolution of $\Omega_1$ with itself
    (equation~\eqref{eq:Omega_conv}).  Computing $\Omega_k$ therefore
    requires at most $k-1$ one-dimensional integrations.
    Alternatively, the $k$-fold convolution can be implemented exactly
    via two Fourier-transform integrations regardless of $k$: compute
    $\hat\Omega_1(\omega)=\int_0^\infty e^{i\omega u}\Omega_1(u)\,\mbox{d}u$,
    raise pointwise to the $k$-th power, and invert.
  \item When $\Omega_1(u)\propto u^{a-1}$ (power-law energies
    $U(x)=|x|^\gamma$), the convolution is in
    closed form via the Gamma convolution theorem
    (equation~\eqref{eq:Omega_k_powerlaw}).
  \item Otherwise (e.g.\ $U(x)=(x^2-1)^2$), $\Omega_k$ is computed
    by numerical convolution --- a one-dimensional operation for any $k$.
  \item For large $k$, the saddlepoint approximation
    (Appendix~D.2) gives a closed-form
    approximation valid for any $U$, with relative error $O(1/k)$.
\end{itemize}

\begin{remark}[Discrete state spaces]\label{rem:discrete}
When $\calX$ is a discrete (finite or countable) set,
the Lebesgue measure (volume)
in~\eqref{eq:Omega} is replaced by the counting measure. The density
of states becomes a \emph{cardinality}:
\begin{equation}\label{eq:Omega_discrete}
  \Omega_k(u)\dfn\#\bigl\{\bx\in \calX^k:~U_k(\bx)=u\bigr\},
\end{equation}
and the partition function identity reads
$Z^k(\beta)=\sum_u e^{-\beta u}\Omega_k(u)$,
where the sum is over the image of $U_k$.
All integrals in the GRIS theory are then replaced by sums, and
the $k$-fold convolution~\eqref{eq:Omega_conv} becomes a discrete
convolution (equivalently, a convolution of the probability generating
function of $U(X)$ under $p$).  The Fourier-transform method applies equally
well via the discrete Fourier transform (DFT): compute the DFT of
$\{\Omega_1(u)\}$, raise pointwise to the $k$-th power, and invert.
All other results --- Proposition~\ref{prop:product} (product weights
worsen MSE), Proposition~\ref{prop:ge_optimal} (group-energy weight functions
are sufficient without loss of optimality), the chi-squared identity~\eqref{eq:chi2_id}, and the
sliding-window theory --- carry over verbatim with integrals replaced
by sums.
\hfill$\lozenge$
\end{remark}

Proposition~\ref{prop:ge_optimal} reduces the search to group-energy
weights $\tilde m(u)$. Within this class, a very natural choice is the
\emph{generalized Gaussian} (GG) family
\begin{equation}\label{eq:GG}
  \calM_k^{\rm GG} \dfn
  \left\{m(\bx)=\exp\!\left(-\frac{[U_k(\bx)]^\alpha}{2s}\right)
    :\;\alpha>1,\;s>0\right\},
\end{equation}
with normalizing constant
\begin{equation}\label{eq:Mk_1d}
  M = \int_0^\infty e^{-u^\alpha/(2s)}\Omega_k(u)\mbox{d}u,
\end{equation}
a one-dimensional integral computable from $\Omega_k$, $\alpha$, and $s$
alone. The shape parameter $\alpha>1$ controls the coupling: values near
$\alpha=1^+$ give a weight that closely approximates
the ideal $e^{-\beta U_k(\bx)}$, while $\alpha=2$ gives a Gaussian
weight in $U_k$. 
The restriction $\alpha>1$ is essential: at $\alpha=1$,
$m(\bx)=e^{-U_k(\bx)/(2s)}=\prod_{i=1}^k e^{-U(x_i)/(2s)}$
is a product weight, which Proposition~\ref{prop:product} shows
worsens the MSE.
For power-law energies $U(x)=|x|^\gamma$, the ideal weight $e^{-\beta u}$
is in the GG family at $\alpha\to 1^+$, $s\to 1/(2\beta)$, so
$\Delta_k\to 0$ as $\alpha\to 1^+$.
For non-power-law energies (e.g.\ the double-well of Example~\ref{ex:doublewell}),
the GG family need not contain the ideal weight, and $\Delta_k>0$ for
all $\alpha>1$; the GG family then serves as a tractable parametric
approximation whose quality depends on how well $e^{-u^\alpha/(2s)}$
approximates $e^{-\beta u}$ near the mode of $p_U^k$.

A broader remark is in order.
Proposition~\ref{prop:ge_optimal} establishes that the search for
an optimal weight function can be confined to the group-energy class
without loss of optimality.  Within this class, however, the
choice of a specific parametric family is inevitably a compromise
between two competing demands: approximation quality (how closely
the family can mimic the ideal weight $e^{-\beta u}$) and
tractability (whether $M$ and $Q_k$ can be evaluated efficiently).
These two demands pull in opposite directions --- the ideal weight
is intractable by definition, and any tractable family necessarily
falls short of it.  A universal theoretical answer to ``which family
is best?'' does not exist and should not be expected: the answer
depends on the energy function $U$, the group size $k$, and the
computational budget available.  The GG family is proposed here as
a principled and flexible choice: it is parametrized by two
interpretable scalar parameters ($\alpha$ and $s$), it contains the
ideal weight as a limiting case, its normalizing constant $M$ is a
one-dimensional integral for any $k$, and it smoothly interpolates
between the near-ideal regime ($\alpha\approx 1^+$) and the Gaussian
regime ($\alpha=2$).  Other tractable families (e.g.\ Gamma-type
weights, mixtures, or spline-based approximations) could be used
in its place, and the theoretical framework developed here ---
particularly the chi-squared representation~\eqref{eq:Deltak_chi2}
--- applies equally to any group-energy family.

\subsection{The Chi-Square Divergence Representation}

We next define
\begin{equation}\label{eq:Deltak_def}
  \Delta_k \dfn \min_{m\in\calM_k^{\rm GG}}\frac{Q_k(m)-1}{k}
\end{equation}
for the optimal asymptotic MSE constant at group size $k$ within the GG family,
and $\Delta_1 = \min_{m_0\in\calM_1^{\rm GG}}V_1(m_0)$ for the $k=1$ baseline.
GRIS improves whenever $\Delta_k<\Delta_1$.

For a group-energy weight $\tilde m(U_k(\bx))$, define the
probability density on $(0,\infty)$:
\begin{equation}\label{eq:tq_def}
  \tilde q(u) \dfn \frac{\tilde m(u)\Omega_k(u)}{M},
\end{equation}
and the density of the group energy $U_k(\bx)$ under $p^{\otimes k}$:
\begin{equation}\label{eq:pUk_def}
  p_U^k(u) \dfn \frac{e^{-\beta u}\Omega_k(u)}{Z^k(\beta)}.
\end{equation}
The \emph{chi-squared divergence} between two densities $q$ and $r$
on $(0,\infty)$ is
\begin{equation}\label{eq:chi2def}
  \chi^2(q\|r) \dfn \int_0^\infty \frac{q^2(u)}{r(u)}\mbox{d}u - 1.
\end{equation}
Substituting~\eqref{eq:tq_def} and~\eqref{eq:pUk_def}
into~\eqref{eq:chi2def} gives the identity
\begin{equation}\label{eq:chi2_id}
  Q_k(\tilde m)-1 = \chi^2\left(\tilde q\|p_U^k\right).
\end{equation}
The $k=1$ case reduces to the chi-squared divergence on $\calX$:
defining $q(x)=m(x)/M$,
\begin{equation}\label{eq:chi2_k1}
  Q_1(m)-1 = \chi^2(q\|p),
\end{equation}
which was already noted in~\cite{LiuFang2015} (p.~516).
This extends to $k>1$ via $\Omega_k$.

The key point is that $\chi^2(\tilde q\|p_U^k)$ lives on the
one-dimensional group-energy space $(0,\infty)$, making $\Delta_k$
a tractable one-dimensional optimization:
\begin{equation}\label{eq:Qk_1d}
  Q_k(\tilde m) - 1
  = \frac{Z^k(\beta)\int_0^\infty\tilde m^2(u)e^{\beta u}\Omega_k(u)\mbox{d}u}
         {\left[\int_0^\infty\tilde m(u)\Omega_k(u)\mbox{d}u\right]^2} - 1,
\end{equation}
and hence
\begin{equation}\label{eq:Deltak_chi2}
  \Delta_k = \frac{1}{k}\,\min_{\tilde q\in\calQ_k}
  \chi^2\!\left(\tilde q\;\|\;p_U^k\right),
\end{equation}
where $\calQ_k$ is the image of $\calM_k^{\rm GG}$ under
the map $\tilde m(u)\mapsto \tilde q(u)=\tilde m(u)\Omega_k(u)/M$.
Grouping improves (i.e., $\Delta_k<\Delta_1$) whenever
$\chi^2(\tilde q^*\|p_U^k)<k\Delta_1$, where $\tilde q^*$
is the minimizer in~\eqref{eq:Deltak_chi2}.

The identity~\eqref{eq:chi2_id} thus reduces $\Delta_k$ to a
tractable one-dimensional optimization over $\alpha>1$ and $s>0$.
Numerical illustrations of these formulas, including exact values of
$\Delta_k$ for $k=1,2,3$ and figures showing the MSE as a function of
$\alpha$ and $k$, are given in Section~\ref{sec:numerics}.

\begin{conjecture}[Monotonicity of $\Delta_k$]\label{conj:monotone}
$\Delta_k \le \Delta_{k-1}$ for all $k\ge 2$, i.e., the optimal
asymptotic MSE constant within the GG family is non-increasing in $k$.
\end{conjecture}
\noindent
This is strongly supported by the numerical evidence in
Table~\ref{tab:deltak} and all examples in Section~\ref{sec:numerics},
but a general proof remains open.
For power-law energies $U(x)=|x|^\gamma$, $p_U^k$ is a Gamma
distribution with known parameters, and the conjecture may be
approachable via explicit representations of $\Delta_k$; we leave
this as an open problem.

\subsection{When is $\Omega_k(u)$ Available but $Z(\beta)$ is Not?}
\label{sec:omega_availability}

A natural question arises: if $\Omega_k(u)$ is available in closed form or by
easy numerical integration, can one not
simply compute $Z(\beta)=\int_0^\infty e^{-\beta u}\Omega_1(u)\mbox{d}u$
directly and avoid the need for RIS or GRIS altogether? In general, yes --- and this
is not a contradiction. Nonetheless, the situation where GRIS is genuinely useful
is one where $\Omega_k(u)$ is available but $Z(\beta)$ is not, and this
can occur in two ways.

First, and most importantly, the \emph{perturbed energy} setting
(Section~\ref{sec:perturbed}): when $U(x)=U^\star(x)+\epsilon V(x)$, the
density of states $\Omega_k^\star(u)$ of the main term $U^\star$ is tractable,
but the density of states of the full energy $U$ is not --- because
the perturbation $V$ may couple coordinates or otherwise destroy the
separable structure that made $\Omega_k^\star$ tractable, rendering
$Z(\beta)=\int_0^\infty e^{-\beta u}\Omega_k(u)\mbox{d}u$ intractable.
In this case one uses a weight based on $U^\star$ alone,
and GRIS provides a consistent estimate of $Z(\beta)$ at a controlled
MSE cost (Section~\ref{sec:perturbed}).

Second, even when $\Omega_k$ and $Z(\beta)$ are both available in principle,
computing $Z(\beta)$ via the Laplace transform $\int_0^\infty e^{-\beta u}\Omega_k(u)\mbox{d}u$
may be no easier than the original problem --- for instance, when
$\Omega_k$ is only available numerically and the Laplace integral requires
high-accuracy evaluation at a specific $\beta$. In such cases, GRIS
with samples from $p$ may be the more convenient route.

\section{Sliding-Window Grouping}
\label{sec:overlap}

We now extend the GRIS scheme to incorporate overlapping groups, and in
particular - cyclic sliding windows. Before proceeding, we
flag an important caveat regarding the theoretical guarantees
we had for NOL groups.
Both Proposition~\ref{prop:product} (product-form
weights worsen the MSE) and Proposition~\ref{prop:ge_optimal} (group-energy
weight functions are sufficient without loss of optimality) were proved
for the NOL estimator $\hat{Z}^{\rm nol}_k$
of Section~\ref{sec:setup}, whose asymptotic MSE constant
$V_k^{\rm nol}(m)$ is defined in~\eqref{eq:mse_derivation}
and depends only on
$Q_k(m)$. For the sliding-window variants, the MSE involves
additional cross-covariance terms, henceforth denoted $R_\ell$ ($\ell$ --
positive integer), and neither
result has been established in this setting --- the arguments do not
carry over directly because they do not control these covariance terms.
In particular, it cannot be ruled out that a product-form weight or a
non-group-energy weight achieves smaller sliding-window MSE than any
group-energy weight in the GG family, by exploiting the covariance
structure in ways that a scalar group-energy weight cannot. We
nevertheless continue to work with group-energy weights from the GG
family throughout this section.
As argued at the end of Section~\ref{sec:ge_optimal}, the choice of
a tractable parametric family within the group-energy class is
always a compromise between approximation quality and computational
feasibility, with no universal theoretical optimum.
The GG family was chosen in Section~\ref{sec:ge_optimal} on those
grounds, and those grounds apply equally here: the GG family is
tractable, flexible, and contains the ideal weight as a limit.
We therefore adopt it for the sliding-window schemes as well,
with the understanding that the resulting estimators are not
claimed to be globally optimal --- only that they provably improve
on the NOL baseline, as shown below.
Importantly, what \emph{can} be said on the
positive side is that the sliding-window estimators with group-energy
weights \emph{provably outperform} the NOL estimator
with group-energy weights, whenever a certain condition
holds --- a result that does not require optimality of group-energy
weights, only that they are used consistently in both schemes.
The theoretical question of whether these restrictions entail any
loss relative to the true optimum of the sliding-window problem remains
open (see the open problem described in Section~\ref{sec:conclusions}).

As mentioned earlier, we study two types of sliding-window grouping schemes -- fixed-weight
sliding windows (FSW) and variable-weight sliding windows (VSW).
Throughout this section, overlapping groups are denoted $\tbx_i$ (with
a tilde, to distinguish them from the NOL groups $\bx_j$ of
Section~\ref{sec:setup}).

\subsection{Fixed-Weight Sliding Window Grouping}

The \emph{fixed-weight sliding-window} estimator uses $n$ overlapping groups
with a cyclic wrap-around. In other words given the data $x_1,\ldots,x_n$, our
`samples' are now
\begin{equation}
\tbx_i=(x_i,\ldots,x_{[(i+k-2)~\mbox{\tiny mod}~n]+1}), 
\end{equation}
where
$\ell~\mbox{mod}~n$ designates $\ell$ modulo $n$, namely, $\ell-n\cdot\lfloor
\ell/n\rfloor$. The estimator is given by
\begin{equation}\label{eq:sl_est}
  \ln\hat{Z}^{\rm fsw}_k(\beta)
  = \frac{1}{k}\!\left[\ln M - \ln\!\left(\frac{1}{n}
    \sum_{i=1}^n m(\tbx_i)e^{\beta U_k(\tbx_i)}\right)\right].
\end{equation}
Let $W_i=m(\tbX_i)e^{\beta U_k(\tbX_i)}$ and define the overlap covariances
\begin{equation}\label{eq:sigma_ell}
  R_\ell\dfn\Cov\{W_1,W_{\ell+1}\}\;\text{ for }\ell=0,1,\ldots,k-1,
\end{equation}
For $1\le \ell \le k-1$, groups $\tbx_1$ and $\tbx_{\ell+1}$ share $k-\ell$
samples, and $R_\ell=0$ for all $\ell\ge k$ since groups separated by
$k$ or more steps share no samples. A direct second-moment calculation
(Appendix~C) gives
\begin{eqnarray}\label{eq:mse_sl}
V_k^{\rm fsw}(m) &\dfn&
\lim_{n\to\infty} n\cdot\MSE\left\{\ln\hat{Z}^{\rm
fsw}_k(\beta)\right\}\nonumber\\
&=&\frac{R_0+2\cdot\sum_{\ell=1}^{k-1}R_\ell}
               {k^2\mu_W^2}\nonumber\\
&=&\frac{Q_k(m)-1}{k^2}\!\left(1+2\cdot\sum_{\ell=1}^{k-1}\rho_\ell\right),
\end{eqnarray}
where $\mu_W=\bE\{W_1\}$ and
$\rho_\ell\dfn R_\ell/R_0$, $\ell=1,\ldots,k-1$, are the overlap correlation coefficients.
The FSW estimator outperforms the NOL estimator if and only if
\begin{equation}\label{eq:slide_cond}
  \sum_{\ell=1}^{k-1}\rho_\ell <\frac{k-1}{2}.
\end{equation}
For $k=2$, this reduces to $\rho_1<1/2$.

We next derive the above covariances.
For $\ell<k$, groups $\tbx_1=(x_1,\ldots,x_k)$ and
$\tbx_{\ell+1}=(x_{\ell+1},\ldots,x_{\ell+k})$
share samples $x_{\ell+1},\ldots,x_k$. Their group energies decompose as
$U_k(\tbx_1)=A_\ell+B_\ell$ and $U_k(\tbx_{\ell+1})=B_\ell+C_\ell$, where
\begin{align*}
  A_\ell &= \textstyle\sum_{i=1}^\ell U(x_i) \quad\text{(unshared, first group)},\\
  B_\ell &= \textstyle\sum_{i=\ell+1}^k U(x_i) \quad\text{(shared)},\\
  C_\ell &= \textstyle\sum_{i=k+1}^{k+\ell} U(x_i) \quad\text{(unshared, second group)},
\end{align*}
all three mutually independent.
For a group-energy weight $m(\bx)=\tilde m(U_k(\bx))$, define the
\emph{reweighted kernel} as $\tilde w(u)\dfn\tilde m(u)e^{\beta u}$. Then:
\begin{equation}\label{eq:sigma_decomp}
\bE\{W_1 W_{\ell+1}\}
=\bE\left\{\bE\{\tilde w(A_\ell+B_\ell)\mid B_\ell\}\cdot
    \bE\{\tilde w(B_\ell+C_\ell)\mid B_\ell\}\right\},
\end{equation}
where the outer expectation is over $B_\ell$.
Define
\begin{equation}\label{eq:g_ell}
  g_\ell(v) \dfn \int_0^\infty \tilde w(u+v)e^{-\beta u}\Omega_\ell(u)\mbox{d}u
  = e^{\beta v}\int_0^\infty \tilde m(u+v)\Omega_\ell(u)\mbox{d}u,
\end{equation}
so that $\bE\{\tilde w(A_\ell+v)\mid B_\ell=v\} = 
\bE\{\tilde w(v+C_\ell)\mid B_\ell=v\} = 
g_\ell(v)/Z^\ell(\beta)$,
where the factor $e^{\beta u}$ from $\tilde w(u+v)$ and the factor
$e^{-\beta u}$ from the averaging density of $A_\ell$ cancel,
leaving only $e^{\beta v}$ outside the integral.
Substituting into~\eqref{eq:sigma_decomp} and 
averaging over
$B_\ell$ with density $e^{-\beta v}\Omega_{k-\ell}(v)/Z^{k-\ell}(\beta)$, we obtain
\begin{eqnarray}\label{eq:cross_cov_exact}
\bE\{W_1 W_{\ell+1}\}
& =&\frac{1}{Z^{k+\ell}(\beta)}
\int_0^\infty [g_\ell(v)]^2e^{-\beta v}\Omega_{k-\ell}(v)\mbox{d}v\nonumber\\
&=&\frac{1}{Z^{k+\ell}(\beta)}\int_0^\infty \left[\int_0^\infty \tilde{m}(u+v)\Omega_\ell(u)\mbox{d}u\right]^2
e^{\beta v}\Omega_{k-\ell}(v)\mbox{d}v,
\end{eqnarray}
and therefore, using $\mu_W = M/Z^k(\beta)$:
\begin{equation}\label{eq:R_ell_exact}
R_\ell = \frac{1}{Z^{k+\ell}(\beta)}\int_0^\infty \left[\int_0^\infty \tilde{m}(u+v)\Omega_\ell(u)\mbox{d}u\right]^2
e^{\beta v}\Omega_{k-\ell}(v)\mbox{d}v
-\frac{M^2}{Z^{2k}(\beta)},
\end{equation}
so $R_\ell$ requires two nested one-dimensional integrations.
For the GG weight $\tilde m(u)=e^{-u^\alpha/(2s)}$, equation~\eqref{eq:g_ell}
becomes
\begin{equation}\label{eq:g_ell_GG}
  g_\ell(v) = e^{\beta v}\int_0^\infty e^{-(u+v)^\alpha/(2s)}\Omega_\ell(u)\mbox{d}u,
\end{equation}
which for $U(x)=|x|$, $\Omega_\ell(u)=2^\ell u^{\ell-1}/(\ell-1)!$,
is computable by one-dimensional numerical integration for each $v$.
These expressions are used in the numerical examples of Section~\ref{sec:numerics}.

\subsection{Variable-Weight Sliding Window Grouping}
\label{sec:periodic}

We now allow the weight function vary across group positions with the hope
of reducing the MSE  further.
To fix ideas, consider first the simplest case of $k=2$, where we now allow ourselves to
alternate between two weight functions: $m_1$ for odd-indexed
pairs, $(x_{2j-1},x_{2j})$, and $m_2$ for even-indexed pairs,
$(x_{2j},x_{2j+1})$, $j=1,2,\ldots$. The estimator is
\begin{equation}\label{eq:per_est_k2}
  \ln\hat{Z}^{\rm vsw}_2(\beta)
  = \tfrac{1}{2}\left[\ln\tfrac{M_1+M_2}{2}
    - \ln\left(\tfrac{1}{n}\sum_{i=1}^n W_i\right)\right],
\end{equation}
where $W_{2j-1}=m_1(\tbx_{2j-1})e^{\beta U_2(\tbx_{2j-1})}$ and
$W_{2j}=m_2(\tbx_{2j})e^{\beta U_2(\tbx_{2j})}$.
For a generic pair $\tbx\sim p^{\otimes 2}$, denote
$W_l\dfn m_l(\tbx)e^{\beta U_2(\tbx)}$ for $l=1,2$,
so that $\bE\{W_l\}=M_l/Z^2(\beta)$; odd-indexed groups have 
$W_i\overset{d}{=}W_1$ and even-indexed groups have $W_i\overset{d}{=}W_2$.
To see why this is a consistent estimator,
note that for each $l\in\{1,2\}$,
\begin{equation}\label{eq:per_unbiased}
  \bE\left\{m_l(\tbx)e^{\beta U_2(\tbx)}\right\}
  = \int_{\calX^2} m_l(\tbx)\frac{e^{-\beta U_2(\tbx)}}{Z^2(\beta)}\cdot e^{\beta U_2(\tbx)}\mbox{d}\tbx
  = \frac{M_l}{Z^2(\beta)}.
\end{equation}
The sample mean therefore converges a.s.\ to
$(M_1+M_2)/[2Z^2(\beta)]$, and
so $\ln\hat{Z}^{\rm vsw}_2(\beta)\to\ln Z(\beta)$, as required.

For the cyclostationary process $\{W_i\}_{i\ge 1}$, adjacent groups share
one sample and $\bar{R}_\ell=0$ for $\ell\ge 2$,
so $\sum_\ell|\bar{R}_\ell|$ is finite. A direct second-moment calculation
(Appendix~C, Section~C.3) gives
\begin{equation}\label{eq:mse_k2_periodic}
  V_2^{\rm vsw}(m_1,m_2) \dfn
  \lim_{n\to\infty}n\cdot\MSE\{\ln\hat{Z}^{\rm vsw}_2(\beta) \}=
  \frac{\tfrac{1}{2}[\Var\{W_1\}+\Var\{W_2\}] + 2\bar{R}_1}
       {4\bar\mu_W^2},
\end{equation}
where $\bar\mu_W=(M_1+M_2)/[2Z^2(\beta)]$ and
$\bar{R}_1=\Cov\{W_{2j-1},W_{2j}\}$ (same for all $j$ by cyclostationarity)
is the cross-covariance at lag~$1$ between adjacent groups of different
types, which share one sample.
The VSW improves on the FSW
($k=2$ case) iff
\begin{equation}\label{eq:periodic_cond_k2}
  \tfrac{1}{2}[\Var\{W_1\}+\Var\{W_2\}] + 2\bar{R}_1
  \;<\; \Var\{W\} + 2R_1,
\end{equation}
and this is achievable in principle, as we now argue intuitively.
With two free parameters $s_1$ and $s_2$, the VSW scheme has
more degrees of freedom than the FSW (one
parameter $s$), so there is potential room for improvement beyond
what is achievable with a single weight. Whether this potential is
always realized, and under what conditions, is an open question.
In practice, the optimal $(s_1,s_2)$ is found by numerically
minimizing~\eqref{eq:mse_k2_periodic}, and the numerical evidence
consistently shows a strict improvement, as illustrated in the
next section.

We now derive an integral representation of $\bar{R}_1$ for $k=2$.
Adjacent groups $\tbx_{2j-1}=(x_{2j-1},x_{2j})$ and $\tbx_{2j}=(x_{2j},x_{2j+1})$
share the sample $x_{2j}$. Write $V=U(x_{2j})$ for the energy of the
shared sample, and $A=U(x_{2j-1})$, $B=U(x_{2j+1})$ for the energies
of the two unshared samples, all three i.i.d.\ $\sim p_U$.
For a group-energy weight $\tilde m_l$ with reweighted kernel
$\tilde w_l(u)=\tilde m_l(u)e^{\beta u}$, define
\begin{equation}\label{eq:gl_k2}
  g_l(v) \dfn e^{\beta v}\int_0^\infty \tilde m_l(u+v)\,\Omega_1(u)\,\mbox{d}u,
  \qquad l=1,2,
\end{equation}
so that $\bE\{\tilde w_l(A+v)\} = g_l(v)/Z(\beta)$
(the factors $e^{\beta u}$ from $\tilde w_l$ and $e^{-\beta u}$ from the
density of $A\sim p_U$ cancel, as in~\eqref{eq:g_ell}).
Conditioning on $V$ and using the independence of $A$ and $B$ given $V$:
\begin{align}\label{eq:R1_k2}
  \bE\{W_{2j-1}W_{2j}\}
  &= \bE\left\{\bE\{\tilde w_1(A+V)\mid V\}\cdot\bE\{\tilde w_2(V+B)\mid
V\}\right\}
  \nonumber\\
  &= \frac{1}{Z^3(\beta)}\int_0^\infty g_1(v)\,g_2(v)e^{-\beta v}\Omega_1(v)\mbox{d}v,
\end{align}
and therefore
\begin{equation}\label{eq:barR1_exact}
  \bar{R}_1
  = \frac{1}{Z^3(\beta)}\int_0^\infty g_1(v)g_2(v)e^{-\beta v}\Omega_1(v)\mbox{d}v
    - \frac{M_1 M_2}{Z^4(\beta)}.
\end{equation}
This exact formula requires one one-dimensional integration per $g_l$ evaluation
(the inner integral in~\eqref{eq:gl_k2}) and one outer integration in $v$.
For the GG weight $\tilde m_l(u)=e^{-u^\alpha/(2s_l)}$:
\begin{equation}\label{eq:gl_GG}
  g_l(v) = e^{\beta v}\int_0^\infty e^{-(u+v)^\alpha/(2s_l)}\,\Omega_1(u)\,\mbox{d}u.
\end{equation}

This is in the spirit of antithetic variates~\cite{Hammersley1956},
but applied to the weight functions rather than the samples:
the samples remain i.i.d.\ from $p$, and the MSE reduction
is achieved by alternating weight functions with different scale parameters.
Numerical results are given in Section~\ref{sec:numerics}.

Moving on to a general $k$, we now
cycle through $k$ weight functions
$m_1,\ldots,m_k$, using $m_{1+((i-1)\bmod k)}$ for group $\tbx_i$,
with $W_l\dfn m_l(\tbx)e^{\beta U_k(\tbx)}$ for $l=1,\ldots,k$
and a generic group $\tbx\sim p^{\otimes k}$. The same consistency argument extends:
$\bE\{W_l\}=M_l/Z^k(\beta)$ for each $l$, so the sample mean
converges to $\bar\mu_W=\frac{1}{k}\sum_{l=1}^{k}M_l/Z^k(\beta)$
and the estimator converges to $\ln Z(\beta)$.
For the time-averaged cross-covariance $\bar{R}_\ell$ at lag $\ell$,
the same conditioning argument generalizes~\eqref{eq:barR1_exact}.
Groups at lag $\ell$ share $k-\ell$ consecutive samples with total
energy $B_\ell$; the unshared portions have energies
$A_\ell$ and $C_\ell$. For weight functions
$m_{j}$ and $m_{j'}$ with $j'=1+((j-1+\ell)\bmod k)$, define
\begin{equation}\label{eq:gl_general}
  \phi_j^\ell(v) \dfn e^{\beta v}\int_0^\infty
  \tilde m_j(u+v)\Omega_\ell(u)\mbox{d}u, \qquad j=1,\ldots,k,
\end{equation}
so that $\bE\{\tilde w_j(A_\ell+v)\}=\phi_j^\ell(v)/Z^\ell(\beta)$
(since $A_\ell\perp B_\ell$, the conditional expectation given $B_\ell=v$
equals the marginal expectation at fixed $v$).
Then, averaging over pairs $(j,j')$ separated by lag $\ell$:
\begin{equation}\label{eq:barRell_exact}
  \bar{R}_\ell
  = \frac{1}{k}\sum_{j=1}^{k}
    \left[\frac{1}{Z^{k+\ell}(\beta)}\int_0^\infty
    \phi_j^\ell(v)\,\phi_{j'}^\ell(v)\,
    e^{-\beta v}\,\Omega_{k-\ell}(v)\,\mbox{d}v
    - \frac{M_j\,M_{j'}}{Z^{2k}(\beta)}\right],
\end{equation}
a formula requiring at most two nested one-dimensional integrations
for each term.
The sign and magnitude of $\bar{R}_\ell$ depend on the choice of
weight parameters $(s_1,\ldots,s_k)$ and are computed
exactly via~\eqref{eq:barRell_exact}.

The asymptotic MSE constant (Appendix~C, Section~C.3) is:
\begin{equation}\label{eq:mse_periodic}
  V_k^{\rm vsw}(m_1,\ldots,m_k) \dfn
  \lim_{n\to\infty}n\cdot\MSE\left\{\ln\hat{Z}^{\rm vsw}_k(\beta)\right\}
  = \frac{1}{k^2\bar\mu_W^2}\!\left[
    \frac{1}{k}\sum_{l=1}^{k}\Var\{W_l\}
    + 2\sum_{\ell=1}^{k-1}\bar{R}_\ell\right],
\end{equation}
with parameters $(s_1,\ldots,s_k)$ chosen by minimizing this
expression numerically.
The non-overlapping formula corresponds to $\bar R_\ell=0$ (groups share
no samples, regardless of the weights), and the sliding-window formula
to the case $m_l=m$ for all $l$, giving $\bar R_\ell=R_\ell$
(see Appendix~C).

The VSW improves on the FSW whenever
\begin{equation}\label{eq:periodic_cond}
  \frac{1}{k}\sum_{l=1}^{k}\Var\{W_l\} + 2\sum_{\ell=1}^{k-1}\bar{R}_\ell
  \;<\;
  \Var\{W\} + 2\sum_{\ell=1}^{k-1} R_\ell.
\end{equation}
Numerical illustrations for $k=2$ and $k=3$ are given in Section~\ref{sec:numerics}.

\section{Numerical Examples}
\label{sec:numerics}

First, a few words are in order concerning the choice of examples.
As discussed in Section~\ref{sec:omega_availability}, the scenarios
where $\Omega_k$ is tractable but $Z(\beta)$ is not are the perturbed
energy setting (Section~\ref{sec:perturbed}) and genuinely
high-dimensional problems. The three examples below do not fall into
either category: they are one-dimensional with both $\Omega_k$ and
$Z(\beta)$ tractable. They are deliberately included as \emph{toy examples} serving
as \emph{controlled experiments} only. Since $Z(\beta)$ is available in
closed-form exactly, in these examples, we can report exact MSE values, computed by one-dimensional
integration rather than noisy Monte Carlo estimates. This allows
clean verification of the theory and an unambiguous comparison of the three
tiers of GRIS (non-overlapping, FSW, and VSW).
In realistic applications, $Z(\beta)$ would not be available, and the
GRIS estimator would be applied exactly as described here, with the
one-dimensional integrals for $M$ and $Q_k(m)$ evaluated from the
known or computable $\Omega_k$. We also note that the toy examples
can be interpreted as the $\epsilon\to 0$ limit of the perturbed
Hamiltonian setting of Section~\ref{sec:perturbed}: as the perturbation
vanishes, $Z(\beta)$ becomes tractable and the unperturbed examples
are recovered. The GRIS method itself works for every value of
$\epsilon$, and the MSE grows continuously from its unperturbed value
as $\epsilon$ increases.

\begin{example}\label{ex:1prime}
Let $U(x)=|x|$, $\calX=\reals$, and $\beta=1$. Then $Z(1)=2$ and
$p(x)=\frac{1}{2}e^{-|x|}$.
The group density of states is $\Omega_k(u)=2^k u^{k-1}/(k-1)!$
(see eq.\ \eqref{eq:Omega_k_exp}).
For the Gaussian weight $\tilde m(u)=e^{-u^2/(2s)}$, the normalizing
constant is
\begin{equation}\label{eq:Mk_gamma}
M=\frac{2^k}{(k-1)!}\int_0^\infty u^{k-1}e^{-u^2/(2s)}\mbox{d}u
=\frac{2^{k-1}}{(k-1)!}\cdot(2s)^{k/2}\Gamma\!\left(\frac{k}{2}\right),
\end{equation}
where $\Gamma(t)\dfn\int_0^\infty u^{t-1}e^{-u}\mbox{d}u$, and the
exact formula for $\Delta_k$ (to be optimized over $s>0$) is
\begin{equation}\label{eq:Deltak_explicit}
\frac{Q_k(m)-1}{k}
=\frac{1}{k}\left[
\frac{(k-1)!\displaystyle\int_0^\infty u^{k-1}e^{u-u^2/s}\mbox{d}u}
{2^{k-2}\cdot(2s)^k[\Gamma(k/2)]^2}
-1\right],
\end{equation}
using the closed form for the denominator already derived in~\eqref{eq:Mk_gamma}.
Throughout, we use $m_s(\bx)=e^{-[U_k(\bx)]^2/(2s)}$ with $s$ optimized
per scheme.
The natural external comparator for GRIS is the best $k=1$ RIS estimator
within the GG family with $\alpha$ also optimized.
As Table~\ref{tab:deltak} shows, this gives $\Delta_1^{\rm opt}\approx 0$
because the ideal weight $e^{-\beta U(x)}$ is in the GG family at
$\alpha\to 1^+$, $s\to 1/(2\beta)$ --- but achieving this requires
setting $s=Z(\beta)/2$, which is the unknown we are trying to estimate.
The practically accessible regime is therefore fixed $\alpha=2$
(Gaussian group-energy weight, $s$ optimized), where the GRIS gains of
$33$--$50\%$ at $k=2,3$ are the meaningful figure; and in the perturbed
Hamiltonian setting of Section~\ref{sec:perturbed}, where $Z(\beta)$ is
genuinely intractable, this is the only option. 

\medskip\noindent{\em Non-overlapping grouping.}
Table~\ref{tab:deltak} reports on results of $\Delta_k$ for $k=1,\ldots,10$:
the MSE falls from $0.0807$ at $k=1$ to $0.0144$ at $k=10$ ($82\%$
reduction), with diminishing marginal returns as $k$ grows.
For the FSW and VSW comparisons below we fix $\alpha=2$ (Gaussian
weight, $s$ optimized), so that the sliding-window gains are assessed
on top of the same NOL baseline.

\begin{table}[h!]
\centering
\begin{tabular}{cc}
\toprule
$k$ & $\Delta_k$\\
\midrule
  1  & 0.08074 \\
  2  & 0.05411 \\
  3  & 0.04041 \\
  4  & 0.03216 \\
  5  & 0.02669 \\
  8  & 0.01763 \\
  10 & 0.01437 \\
\bottomrule
\end{tabular}
\caption{Calculated values of $\Delta_k = \min_{s>0}[Q_k(m_s)-1]/k$
for $U(x)=|x|$, $\beta=1$, Gaussian weight
$m_s(\bx)=e^{-[U_k(\bx)]^2/(2s)}$, $s$ optimized via
one-dimensional minimization of~\eqref{eq:Deltak_explicit}.}
\label{tab:deltak}
\end{table}

Figure~\ref{graph1} shows $V_k^{\rm nol}(m_s)$ vs.\ shape exponent $\alpha$
for $k=1,2,3$: all curves approach zero as $\alpha\to 1^+$ (ideal weight),
and grouping strictly reduces the MSE for every $\alpha>1$.
The Gaussian choice $\alpha=2$ is used throughout.

\medskip\noindent{\em Fixed-weight sliding window (FSW).}
Condition~\eqref{eq:slide_cond} is satisfied at $k=2$ ($\rho_1=0.292<0.5$)
and $k=3$ ($\rho_1+\rho_2=0.591<1.0$), so the FSW improves on NOL
(Table~\ref{tab:sliding}): the cumulative gain rises from $33\%$ to $47\%$
at $k=2$, and from $50\%$ to $64\%$ at $k=3$.

\medskip\noindent{\em Variable-weight sliding window (VSW).}
The VSW adds a further $30$--$40\%$ on top of the FSW (Table~\ref{tab:sliding}).
At $k=2$, $(s_1^*,s_2^*)=(0.816,3.081)$ gives a $62.8\%$ cumulative
gain --- already matching the $k=3$ FSW at two-thirds the cost.
At $k=3$, $(s_1^*,s_2^*,s_3^*)=(1.491,1.491,4.484)$ gives
$V_3^{\rm vsw}=0.0178$, a $77.9\%$ cumulative gain.

\begin{table}[h!]
\centering
\begin{tabular}{ccccc}
\toprule
$k$ & Scheme & $s^*$ / $(s_{\rm lo},s_{\rm hi})$ & $V_k$ & \% gain vs.\ $k=1$ \\
\midrule
1 & NOL & 1.411 & 0.0807 & --- \\
2 & NOL & 2.379 & 0.0541 & 33.0\% \\
2 & FSW & 2.387 & 0.0428 & 46.9\% \\
2 & VSW ($s_{\rm lo}{=}0.816$, $s_{\rm hi}{=}3.081$) & --- & 0.0300 & 62.8\% \\
3 & NOL & 3.365 & 0.0404 & 50.0\% \\
3 & FSW & 3.373 & 0.0294 & 63.6\% \\
3 & VSW ($s_1{=}s_2{=}1.491$, $s_3{=}4.484$) & --- & 0.0178 & 77.9\% \\
\bottomrule
\end{tabular}
\caption{Exact asymptotic MSE constant $V_k$ for Example~\ref{ex:1prime}, Gaussian
  group-energy weight $m=e^{-[U_k(\bx)]^2/(2s)}$, $s$ optimized per row.
  NOL, FSW, and
  VSW with optimal antithetic pair $(s_{\rm lo},s_{\rm hi})$.
  All values by exact one-dimensional integration.}
\label{tab:sliding}
\end{table}

\medskip\noindent{\em Larger group sizes.}
Figure~\ref{fig:mse_vs_k} extends the comparison to $k=1,\ldots,8$.
NOL and FSW points are exact global optima; VSW is exact for $k\le 4$
(solid) and an upper bound for $k\ge 5$ (dashed, antithetic
$(s_{\rm lo},\ldots,s_{\rm lo},s_{\rm hi})$ pattern).
All three curves decrease steadily and stay roughly evenly spaced;
at $k=8$ the cumulative gains are $78.2\%$ (NOL), $85.4\%$ (FSW),
and $92.4\%$ (VSW upper bound).

\begin{figure}[!htbp]
\centering
\includegraphics[width=0.62\textwidth]{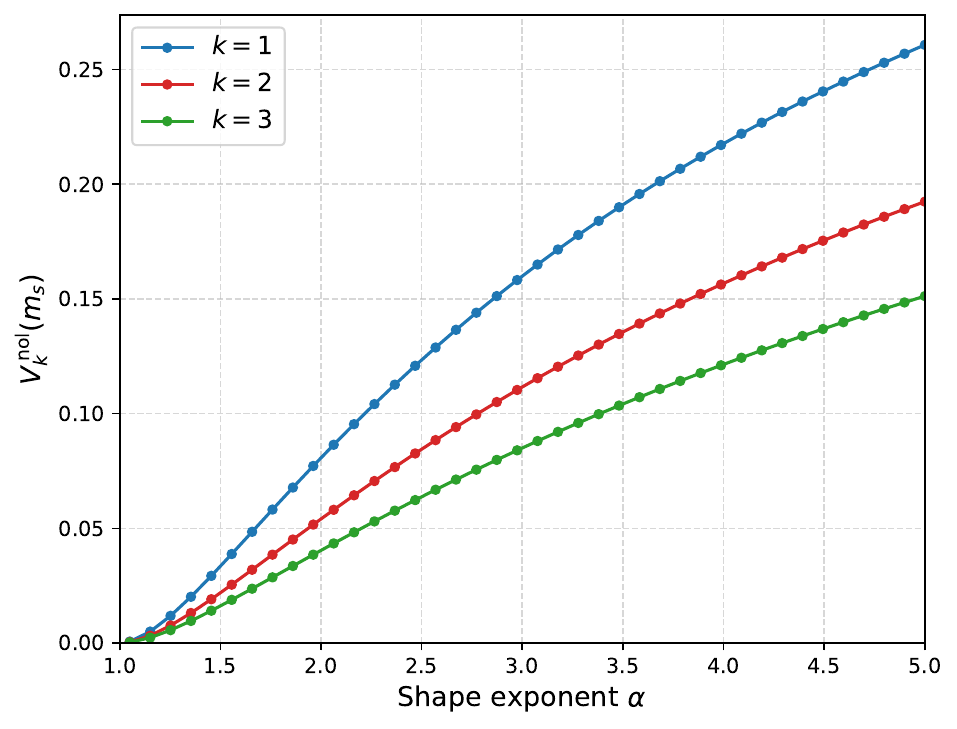}
\caption{Asymptotic MSE constant $V_k^{\rm nol}(m_s)$ vs.\ shape exponent
$\alpha$ for NOL
  grouping, $k=1,2,3$ (Example~\ref{ex:1prime}).}
\label{graph1}

\medskip
\includegraphics[width=0.62\textwidth]{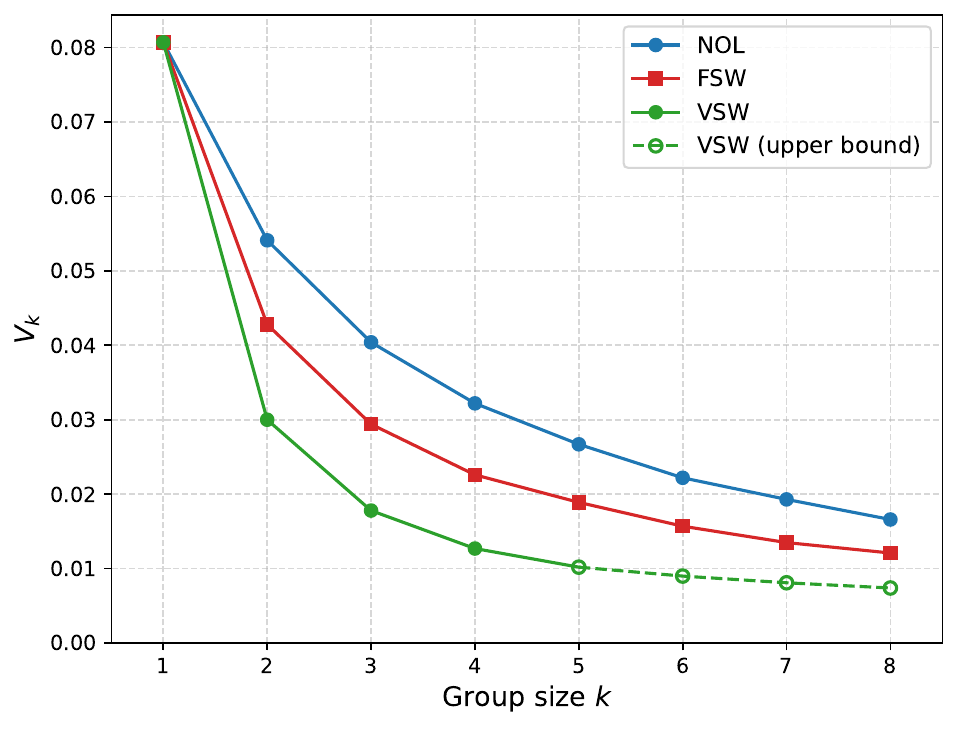}
\caption{Asymptotic MSE constant $V_k$ vs.\ group size $k$: NOL (blue),
  FSW (red), VSW (green; solid $k\le4$ exact, dashed $k\ge5$ upper bound).}
\label{fig:mse_vs_k}
\end{figure}

\end{example}

\begin{example}\label{ex:stretched_exp}
Let $U(x)=|x|^{3/2}$, $\calX=\reals$, and $\beta=1$.
A change of variables gives $Z(\beta)=\tfrac{4}{3}\Gamma(\tfrac{2}{3})\approx 1.805$,
and by eq.\ \eqref{eq:Omega_k_stretch} in Appendix~D:
  \begin{equation}\label{eq:omega_stretched}
    \Omega_k(u) = \frac{\bigl[\tfrac{4}{3}\Gamma\!\left(\tfrac{2}{3}\right)\bigr]^k}
                       {\Gamma(2k/3)}\,u^{2k/3-1},\quad u>0.
  \end{equation}
We use $m(\bx)=e^{-[U_k(\bx)]^2/(2s)}$ with $s$ (or $(s_1,\ldots,s_k)$
for VSW) optimized via~\eqref{eq:Qk_1d},\eqref{eq:sigma_decomp}.
Table~\ref{tab:ex2} summarizes the results.
  \begin{table}[h!]\centering
  \begin{tabular}{cccccc}
  \toprule
  $k$ & Scheme & $\Delta_k$ & \% gain & $\sum_{\ell=1}^{k-1}\rho_\ell$ & SW cond. \\
  \midrule
  1 & ordinary RIS & 0.06395 & ---   & ---   & --- \\
  2 & NOL & 0.04633 & 27.5\%& ---   & --- \\
  2 & FSW & 0.03788 & 40.8\%& 0.318 & $0.318<0.5$ \checkmark \\
  2 & VSW ($s_1{=}0.432$, $s_2{=}2.493$) & 0.02129 & 66.7\% & --- & --- \\
  3 & NOL & 0.03607 & 43.6\%& ---   & --- \\
  3 & FSW & 0.02715 & 57.5\%& 0.629 & $0.629<1.0$ \checkmark \\
  3 & VSW ($s_1{=}s_2{=}0.817$, $s_3{=}3.608$) & 0.01205 & 81.2\% & --- & --- \\
  \bottomrule
  \end{tabular}
  \caption{Exact asymptotic MSE constant $V_k$ for Example~2
    ($U(x)=|x|^{3/2}$, Gaussian weight $\alpha=2$, $s$ optimized per row).}
  \label{tab:ex2}
  \end{table}
The FSW satisfies condition~\eqref{eq:slide_cond} at both $k=2$ and $k=3$,
improving on NOL by $13$--$14\%$.
The VSW again adds roughly $30\%$ on top of the FSW:
at $k=2$, $(s_1^*,s_2^*)=(0.432,2.493)$ gives $V_2^{\rm vsw}=0.02129$
($66.7\%$ cumulative gain, already below the $k=3$ FSW);
at $k=3$, $(s_1^*,s_2^*,s_3^*)=(0.817,0.817,3.608)$ gives
$V_3^{\rm vsw}=0.01205$ ($81.2\%$ cumulative gain).

\medskip\noindent{\em Interpretation.}
Freeing $\alpha$ (optimizing over $\alpha>1$) gives $\alpha^*\approx 1.05$
for all $k$, collapsing $V_k$ below $0.00054$ --- grouping then
adds only marginally, since the $k=1$ weight with $\alpha\approx 1$
already closely approximates the ideal $e^{-\beta U(x)}$ and
$\Delta_1\approx 0$. This illustrates the fundamental principle:
\emph{grouping helps when and because the $k=1$ weight class cannot
approximate the ideal weight well}.
\end{example}

\begin{example}\label{ex:doublewell}\label{sec:ex3}
Let $U(x)=(x^2-1)^2$, $\calX=\reals$, and $\beta=1$. The Boltzmann
distribution $p(x)\propto e^{-(x^2-1)^2}$ is \emph{bimodal},
with peaks at $x=\pm 1$ and a barrier at $x=0$.
The partition function $Z(\beta)\approx 1.974$ is computed numerically.
Unlike the first two examples, $U$ is not a power of $|x|$, so
$\Omega_k$ has no closed form.

$\Omega_k$ is computed via~\eqref{eq:Omega1_coarea} and the
convolution~\eqref{eq:Omega_conv}, each step one-dimensional.
The weight $m(\bx)=e^{-[U_k(\bx)]^2/(2s)}$ is used throughout;
the $k=1$ baseline gives $\Delta_1\approx 0.02674$ at $s^*\approx 0.597$.
Overlap covariances $R_\ell$ are evaluated by nested one-dimensional
integration in the original coordinates.

  \begin{table}[h!]\centering
  \small
  \begin{tabular}{clccccc}
  \toprule
  $k$ & Scheme & Optimal $s$ & $\Delta_k$ & \% gain & $\sum_{\ell=1}^{k-1}\rho_\ell$ & SW cond. \\
  \midrule
  1 & Ordinary RIS & $0.597$ & 0.02674 & ---    & ---   & --- \\
  2 & NOL & $1.088$ & 0.02090 & 21.8\%  & ---   & --- \\
  2 & FSW & $1.088$ & 0.01732 & 35.2\% & $\rho_1=0.327$ & $0.327<0.5$ \checkmark \\
  2 & VSW & $(0.251,2.037)$ & 0.00574 & 78.5\% & --- & --- \\
  3 & NOL & $1.535$ & 0.01604 & 40.0\%  & ---   & --- \\
  3 & FSW & $1.535$ & 0.01284 & 52.0\%  & $\rho_1{+}\rho_2=0.702$ & $0.702<1.0$ \checkmark \\
  3 & VSW (upper bound) & $(0.35,0.35,2.9)$ & 0.00305 & 88.6\% & --- & --- \\
  \bottomrule
  \end{tabular}
  \caption{Exact asymptotic MSE constant $V_k$ for Example~\ref{ex:doublewell}
    ($U(x)=(x^2-1)^2$, Gaussian weight $\alpha=2$, $s$ optimized per row).
    The $k=2$ VSW value is the exact joint optimum over $(s_1,s_2)$,
    computed by direct multi-dimensional numerical integration (the
    density-of-states $\Omega_k$ has no closed form for this example,
    so~\eqref{eq:R_ell_exact} was evaluated directly in the original
    sample coordinates rather than via $\Omega_k$).  The $k=3$ VSW
    value is a rigorous \emph{upper bound}: it uses the restricted
    antithetic configuration $(s_1,s_1,s_2)$ found by a coarse grid
    search rather than a full joint optimization over $(s_1,s_2,s_3)$,
    so the true VSW optimum at $k=3$ is no larger than $0.00305$.}
  \label{tab:ex3}
  \end{table}

The VSW again yields a large further gain (Table~\ref{tab:ex3}):
at $k=2$, $(s_1^*,s_2^*)=(0.251,2.037)$ gives $V_2^{\rm vsw}=0.00574$
($66.9\%$ below the FSW, $78.5\%$ cumulative gain);
at $k=3$, the antithetic pattern $(s_1,s_1,s_2)=(0.35,0.35,2.9)$
gives $V_3^{\rm vsw}\le 0.00305$ (upper bound), a $76\%$ improvement
over the FSW.
The NOL gains ($22$--$40\%$) are smaller than in the other examples,
consistent with the smaller $\Delta_1=0.0267$.
Condition~\eqref{eq:slide_cond} is satisfied at both $k=2$
($\rho_1=0.327<0.5$) and $k=3$ ($\rho_1+\rho_2=0.702<1.0$).
Table~\ref{tab:all_examples} collects all results.
\end{example}

\begin{table}[h!]
\centering
\small
\renewcommand{\arraystretch}{1.15}
\begin{tabular}{llcccccc}
\toprule
 & & \multicolumn{6}{c}{$V_k$ (\% gain vs.\ $k=1$)} \\
\cmidrule(lr){3-8}
Ex. & $U(x)$ & $k{=}1$ & $k{=}2$ NOL & $k{=}2$ FSW & $k{=}2$ VSW & $k{=}3$ FSW & $k{=}3$ VSW \\
\midrule
1 & $|x|$
  & 0.0807
  & 0.0541 (33)
  & 0.0428 (47)
  & 0.0300 (63)
  & 0.0294 (64)
  & 0.0178 (78) \\
2 & $|x|^{3/2}$
  & 0.0640
  & 0.0463 (28)
  & 0.0379 (41)
  & 0.0213 (67)
  & 0.0272 (58)
  & 0.0121 (81) \\
3 & $(x^2{-}1)^2$
  & 0.0267
  & 0.0209 (22)
  & 0.0173 (35)
  & 0.0057 (79)
  & 0.0128 (52)
  & $\le$0.0031 ($\ge$89) \\
\bottomrule
\end{tabular}
\caption{Summary of $V_k$ across all three examples
  (Gaussian weight, $s$ optimized per cell;
  percentage gains vs.\ $k=1$ in parentheses).
  The $k=3$ VSW entry for Example~3 is a rigorous upper bound;
  see Table~\ref{tab:ex3}.}
\label{tab:all_examples}
\end{table}

\section{Perturbed Energy}
\label{sec:perturbed}

The three examples above
all have a partition function that is available in closed form, or at least
one that lends itself to one-dimensional numerical integration. This raises the question of whether
GRIS is useful when $Z(\beta)$ is genuinely unavailable in that sense.
If the weight depends on $\bx$ only through $U_k(\bx)$, then
$\Omega_k$ is needed for $M=\int_0^\infty\tilde m(u)\Omega_k(u)\mbox{d}u$.
But the identity $\int_0^\infty\Omega_k(u)e^{-\beta u}\mbox{d}u=Z^k(\beta)$
shows that $Z(\beta)$ is then a one-dimensional integral of the same
type and hence equally tractable. Whenever $\Omega_k$ is
computable by the methods of Appendix~D, so is
$Z(\beta)$ --- the two are inseparable.

Genuine intractability arises when
\begin{equation}\label{eq:perturbed_energy}
  U(x) = U^\star(x) + \epsilon V(x),
\end{equation}
where $U^\star(x)$ is a tractable main term and $V(x)$ is an arbitrary
perturbation with $|V(x)|\le 1$ uniformly. For $\epsilon>0$,
$Z(\beta)$ has no closed form. Crucially, if $V(x)$ involves
complicated cross-terms between coordinates --- for example,
$V(x)=\sin(x_1)\sin(x_2)$ in $\reals^2$ --- then $Z(\beta)$ cannot
even be reduced to a one-dimensional integral: the cross-term couples
all coordinates and the density of states of the \emph{full} energy
is genuinely intractable.

The key idea, in such scenarios, is to base the weight on the tractable term
only, i.e., on $U_k^\star(\bx)=\sum_{i=1}^k U^\star(x_i)$
rather than the full group energy $U_k(\bx)$.
We use, for example, the GG weight 
\begin{equation}
m(\bx)=\exp\left\{-\frac{1}{2s}\left[\sum_{i=1}^k
U^\star(x_i)\right]^\alpha\right\} 
\end{equation}
with parameters $\alpha>1$ and $s>0$, and normalizing constant
\begin{equation}
M=\int_0^\infty e^{-u^\alpha/(2s)}\Omega_k^\star(u)\mbox{d}u 
\end{equation}
involving only the tractable density of states $\Omega_k^\star$ of $U^\star$.
Both $\alpha$ and $s$ are optimized jointly.

As mentioned before, the restriction $\alpha>1$ is essential: at $\alpha=1$ the weight
factors as $\prod_i e^{-U^\star(x_i)/(2s)}$, a product form that,
similarly as in Proposition~\ref{prop:product}, can be shown to worsen the MSE.
The natural choice is $\alpha$ close to $1^+$: the ideal (zero-MSE)
weight for the unperturbed problem is $\exp\left\{-\beta\sum_{i=1}^k
U^\star(x_i)\right\}$,
which lies at the product-form boundary $\alpha=1$ (with $s=1/(2\beta)$).
The closer $\alpha$ is to $1$, the better the weight approximates
the ideal while remaining non-product.

Since $|V(x)|\le 1$, we have $e^{\beta\epsilon\sum_i V(x_i)}\le e^{k\beta\epsilon}$
and $(Z(\beta)/Z^\star(\beta))^k \le e^{k\beta\epsilon}$, so
$Q_k(m)\le e^{2k\beta\epsilon}Q_k^\star(m)$ and hence:
\begin{equation}\label{eq:perturbed_bound}
  V_k^{\rm nol}(m) = \frac{Q_k(m)-1}{k} \le
  e^{2k\beta\epsilon}\,V_k^{\rm nol,\star}(m)
  + \frac{e^{2k\beta\epsilon}-1}{k}
  \;\lesssim\; V_k^{\rm nol,\star}(m) + 2\beta\epsilon,
\end{equation}
where $V_k^{\rm nol,\star}(m) = [Q_k^{\star}(m)-1]/k$ is the asymptotic MSE constant
for the unperturbed energy $U^\star$, and the approximation holds for
small $k\beta\epsilon$. The full GRIS machinery --- $M$, the chi-squared
formula, and the sliding-window covariances --- is tractable because
it involves only $\Omega_k^{\star}$.

A sufficient condition for the bound~\eqref{eq:perturbed_bound}
at group size $k$ to be smaller than at $k=1$ is
\begin{equation}\label{eq:perturbed_crit}
  e^{2k\beta\epsilon}\,V_k^{\rm nol,\star}(m)
  + \frac{e^{2k\beta\epsilon}-1}{k}
  <
  e^{2\beta\epsilon}\,V_1^{\rm nol,\star}(m_0)
  + (e^{2\beta\epsilon}-1),
\end{equation}
where $V_1^{\rm nol,\star}(m_0)=[Q_1^{\star}(m_0)-1]$ is the unperturbed
$k=1$ MSE constant.
For small $k\beta\epsilon$ this reduces to $V_k^{\star}<V_1^{\star}$,
which holds whenever grouping helps in the unperturbed problem.
For larger $\epsilon$, the exponential factors penalize larger $k$:
writing $\rho=e^{2\beta\epsilon(k-1)}>1$, condition~\eqref{eq:perturbed_crit}
becomes $\rho\,V_k^{\star}+(\rho-1)/k < V_1^{\star}$, which fails once the
perturbation overhead $(\rho-1)/k$ dominates.
In particular, for a near-ideal weight ($V_k^{\star}\approx 0$),
the condition fails as soon as $e^{2k\beta\epsilon}-1>k(e^{2\beta\epsilon}-1)$,
i.e.\ for any $\epsilon>0$ and $k\ge 2$ --- consistent with the
numerical findings in Tables~\ref{tab:perturbed}--\ref{tab:perturbed_2d}.

Below we demonstrate this in two sub-examples.

\paragraph{Sub-example 1: $U(x)=|x|+\epsilon\cos(x)$ in $\calX=\reals$ ($\beta=1$).}
Take $U^\star(x)=|x|$ and $V(x)=\cos(x)$.  For $\epsilon>0$, $Z(\beta)$
has no closed form.  We use the weight based on $\sum_i|x_i|$
with $\Omega_k^{\star}(u)=2^k u^{k-1}/(k-1)!$ from Example~\ref{ex:1prime}.
Table~\ref{tab:perturbed} shows results for $k=1,2$.
All values are exact: for $k=1$ by one-dimensional integration
(with jointly optimized $\alpha^*,s^*$ at $\epsilon>0$); for $k=2$
by two-dimensional numerical integration with $s$ fixed at the
unperturbed optimum ($s^*=2.379$).

\begin{table}[h!]\centering
\begin{tabular}{cccccc}
\toprule
$\epsilon$ & $\alpha^*$ & $\Delta_1$ (opt.\ $\alpha$) & $\Delta_2$ (opt.\ $\alpha$) & $\Delta_1$ ($\alpha{=}2$) & $\Delta_2$ ($\alpha{=}2$) \\
\midrule
0.00 & 1.001 & $0$ & $0$ & 0.08074 & 0.05411 \\
0.05 & 1.020 & 0.00014 & $4.95\times10^{-5}$ & 0.07942 & 0.05232 \\
0.10 & 1.044 & 0.00061 & 0.00057 & 0.07973 & 0.05215 \\
0.20 & 1.104 & 0.00249 & 0.00305 & 0.08534 & 0.05668 \\
0.50 & 1.441 & 0.01378 & 0.02350 & 0.14507 & 0.11420 \\
\bottomrule
\end{tabular}
\caption{Asymptotic MSE constant $V_k$ for $U(x)=|x|+\epsilon\cos(x)$, $\beta=1$.
  Weight $m(\bx)=e^{-[\sum_i|x_i|]^\alpha/(2s)}$ based on $U^\star(x)=|x|$.
  For $\epsilon>0$, $Z(\beta)$ has no closed form.
  All values exact by numerical integration ($k=1$ by 1D integration;
  $k=2$ exact by 2D integration (both opt-$\alpha$ and Gaussian)).
  The shared $\alpha^*$ column gives the optimal exponent for $k=1$;
  the optimal $\alpha^*$ for $k=2$ is within $0.003$ of these values.
  At $\epsilon=0$ the opt-$\alpha$ entries are exactly $0$ (the
  ideal weight $e^{-\beta U^\star}$, attained as $\alpha\to 1^+$,
  achieves zero MSE for the unperturbed problem).}
\label{tab:perturbed}
\end{table}

As $\epsilon$ grows, $\alpha^*$ moves away from $1$ reflecting the
need to balance approximation quality against perturbation cost.
The table reveals a sharp distinction between the two weight regimes.
With $\alpha=2$ (Gaussian), grouping helps throughout: $\Delta_2 < \Delta_1$
for all $\epsilon$, with a $21$--$35\%$ gain.
With the optimal $\alpha$ (near $1^+$), the picture reverses for moderate
$\epsilon$: at $\epsilon=0.20$ and $\epsilon=0.50$,
$\Delta_2^{\rm opt} > \Delta_1^{\rm opt}$, meaning grouping hurts.
This is consistent with the corrected bound~\eqref{eq:perturbed_bound}:
when $\alpha\approx 1$, the $k=1$ estimator already nearly eliminates
the MSE ($\Delta_1^{\rm opt}\approx 0$), and the factor
$e^{2k\beta\epsilon}$ in the bound amplifies the perturbation cost
proportionally to $k$, so $k=2$ incurs twice the perturbation penalty
for a baseline that is already near zero.

\paragraph{Sub-example 2: $U(x)=|x_1|+|x_2|+\epsilon\sin(x_1)\sin(x_2)$
in $\calX=\reals^2$ ($\beta=1$).}
Here $U^\star(x)=|x_1|+|x_2|$ and $V(x)=\sin(x_1)\sin(x_2)$ is a bounded
cross-term.  For $\epsilon>0$, $Z(\beta)=\int_{\reals^2}
e^{-|x_1|-|x_2|-\epsilon\sin(x_1)\sin(x_2)}\,\mbox{d}x$ has no closed form
and cannot be reduced to a one-dimensional integral, since the cross-term
couples both coordinates. The unperturbed density of states
$\Omega_1^{\star}(u)=4u$ (sum of two $\mathrm{Exp}(1)$ variables).

For $k=1$ (a single 2D sample), the GG weight is
$m(x)=e^{-[U^\star(x)]^\alpha/(2s)}=e^{-(|x_1|+|x_2|)^\alpha/(2s)}$
with $M=\int_0^\infty e^{-u^\alpha/(2s)}\cdot 4u\mbox{d}u$ tractable.
For $k=2$ (a group of two 2D samples $x^{(1)},x^{(2)}\in\reals^2$),
$\sum_{d=1}^2(|x^{(1)}_d|+|x^{(2)}_d|)$ is the sum
of four i.i.d.\ $\mathrm{Exp}(1)$ variables, giving tractable
$\mathrm{Gamma}(4,1)$ density of states.

\begin{table}[h!]\centering
\begin{tabular}{cccccc}
\toprule
$\epsilon$ & $\alpha^*$ & $\Delta_1$ (opt.\ $\alpha$) & $\Delta_2$ (opt.\ $\alpha$) & $\Delta_1$ ($\alpha{=}2$) & $\Delta_2$ ($\alpha{=}2$) \\
\midrule
0.00 & 1.001 & $0$ & $0$ & 0.1075 & 0.0642 \\
0.20 & 1.001 & 0.00387 & 0.00544 & 0.1156 & 0.0706 \\
0.50 & 1.001 & 0.03822 & 0.04002 & 0.1589 & 0.1137 \\
1.00 & 1.001 & 0.16914 & 0.18562 & 0.3230 & 0.2938 \\
\bottomrule
\end{tabular}
\caption{Asymptotic MSE constant $V_k$ for
  $U(x)=|x_1|+|x_2|+\epsilon\sin(x_1)\sin(x_2)$ in $\reals^2$, $\beta=1$.
  Weight $m(x)=e^{-(|x_1|+|x_2|)^\alpha/(2s)}$ based on $U^\star(x)=|x_1|+|x_2|$.
  At $\epsilon=0$: exact values ($k=1$ by 2D integration; $k=2$ by 1D integration).
  At $\epsilon>0$: $k=1$ exact by 2D integration; $k=2$ opt-$\alpha$ and Gaussian
  by Monte Carlo ($1.5\times10^6$ samples).
  Unlike Sub-example~1, $\alpha^*\approx 1.001$ for both $k=1$ and $k=2$
  at all $\epsilon$.  The $\epsilon=0$ opt-$\alpha$ entries are exactly $0$
  (ideal weight achieves zero MSE for the unperturbed problem).}
\label{tab:perturbed_2d}
\end{table}

With $\alpha=2$, grouping again helps throughout ($10$--$40\%$ gain).
With $\alpha\approx 1.001$, grouping again hurts for all $\epsilon>0$.

\medskip\noindent\textbf{Takeaway.}
Both sub-examples exhibit the same dichotomy, which the
bound~\eqref{eq:perturbed_bound} explains.
The bound decomposes the perturbed MSE constant into two terms:
(i)~$e^{2k\beta\epsilon}V_k^{\rm nol,\star}(m)$, the amplified
unperturbed MSE constant, and (ii)~$(e^{2k\beta\epsilon}-1)/k$, the
perturbation overhead.
When the weight is \emph{far from ideal} for the unperturbed problem
(e.g.\ $\alpha=2$), term~(i) is large at $k=1$ but decreases
substantially with $k$ --- this is exactly the grouping gain ---
while term~(ii) is negligible for moderate $k\beta\epsilon$.
The net effect is that grouping helps.
When the weight is \emph{near ideal} ($\alpha\approx 1^+$),
term~(i) is already near zero at $k=1$, so grouping reduces it
only marginally; meanwhile term~(ii) grows with $k$, making each
additional group sample more expensive under perturbation.
The net effect is that grouping hurts.

The practical implication is clear.
In a genuinely intractable problem one cannot drive $\alpha$ close to
$1$ without knowing $Z(\beta)$ (since the ideal weight is
$e^{-\beta U}$, which requires $Z$).
The accessible regime is therefore that of moderate $\alpha$
(e.g.\ $\alpha=2$), where grouping consistently reduces the MSE
even under perturbation.
The deeper lesson is that \emph{the benefit of grouping is governed
by the gap between the weight class and the ideal weight}: when that
gap is large, grouping closes it; when it is already small, grouping
adds perturbation cost without compensating gain.

\section{Conclusions and Open Problems}
\label{sec:conclusions}

This paper introduced and analyzed GRIS, a method for reducing
the asymptotic MSE constant of partition function estimation by
applying joint weights to groups of samples rather than weighting
each sample separately.  Three successive tiers of improvement were
derived and verified: NOL grouping, FSW grouping, and VSW grouping
with antithetic weight alternation, achieving reductions of
$20$--$65\%$ in $V_k$ across three examples with qualitatively
different energy functions.

One direction that we find particularly interesting remains open.
The NOL and FSW schemes are the two extremes of a family
parametrized by the step size $d$ by which the sliding window
advances: NOL corresponds to $d=k$ (no overlap) and FSW to $d=1$
(overlap of $k-1$ samples).  Intermediate step sizes $d\in\{2,\ldots,k-1\}$
with $d\mid k$ interpolate between the two extremes, and
each step size $d$ decomposes the group naturally into
non-overlapping chunks of size $d$, with adjacent groups sharing
exactly $k/d - 1$ chunks.
This spectrum reveals a fundamental tradeoff.  On the one hand,
larger $d$ (less overlap) reduces the number of free weight
parameters available for optimization, since all samples within
a chunk share the same weight, but it also reduces the
cross-covariance terms $R_\ell$ that inflate the MSE.  On the
other hand, a smaller step size $d$ increases overlap and
introduces more cross-covariance, but allows finer-grained
weight variation. The optimal step size $d^*(k)$ balancing
these two effects is unknown.
A key structural question concerns the sufficiency of
\emph{chunk-energy} weight functions $\tilde{m}(U_d(A_1),
U_d(A_2), \ldots)$, where $A_j$ denotes the $j$-th chunk of $d$
consecutive samples and $U_d(A_j)=\sum_{i\in A_j}U(x_i)$.
We conjecture that restricting to weights depending on $\bx$
only through the chunk-energy vector is sufficient without loss
of optimality for the step-$d$ sliding window, by a
Rao--Blackwell argument with the chunk-energy vector as
sufficient statistic.  If true, the normalizing constant $M$
would reduce from a $k$-dimensional integral to a
$(k/d)$-dimensional integral involving the $d$-fold convolution
$\Omega_d$~\eqref{eq:Omega}, interpolating between the
one-dimensional case of NOL ($d=k$, group-energy weights,
Proposition~\ref{prop:ge_optimal}) and the full
$k$-dimensional case of FSW ($d=1$, individual-energy weights,
for which sufficiency holds but tractability is lost).
Establishing this conjecture and characterizing the optimal
$d^*(k)$ remain open.

\medskip
A second direction concerns \emph{multi-dimensional sample arrays}.
Throughout this paper the $n$ samples are indexed by a single
integer $i=1,\ldots,n$.  In many physical applications ---
spin lattices, crystal arrays, spatial random fields --- the samples
are naturally indexed by a two-dimensional (or higher) array
$\{x_{ij}:\,i,j=1,\ldots,\sqrt{n}\}$ of i.i.d.\ draws from $p$.
For NOL grouping, the extension is conceptually straightforward:
partition the array into non-overlapping rectangular blocks of
size $k_1\times k_2$, apply a joint weight to each block, and
observe that Propositions~\ref{prop:product} and~\ref{prop:ge_optimal}
carry over verbatim because the group energy $\sum_{(i,j)\in B}U(x_{ij})$
depends only on the block size $|B|=k_1k_2$, not on its shape or
orientation (the samples being i.i.d.).

The problem becomes genuinely richer in two respects that have no
one-dimensional counterpart.

\emph{(i) Two-dimensional VSW antithetic structure.}
In 1D, VSW cycles through a sequence of $k$ weight functions along
a single axis, creating antithetic variation in one direction.
In a 2D array, one can tile the plane with a $k_1\times k_2$
weight matrix $\{m_{i'j'}\}$, creating antithetic variation
\emph{simultaneously in both directions}.  The cross-covariances
$\bar{R}_{(\ell_1,\ell_2)}$ are now indexed by 2D lag vectors
$(\ell_1,\ell_2)$, and the optimal weight tile can exploit
cancellations between horizontal and vertical antithetic pairs that
are simply unavailable in 1D.  For example, in a $2\times 2$
tile with one low-scale weight at position $(1,1)$ and a high-scale
weight at positions $(1,2)$, $(2,1)$, $(2,2)$, adjacent groups
along both axes carry opposite-scale weights, potentially reducing
covariance in both directions at once.  The geometry of the optimal
tile --- and whether purely antithetic structure (as in 1D) remains
optimal --- is an open question with no 1D analogue.

\emph{(ii) Non-i.i.d.\ samples: the Markov case.}
A separate and independent extension --- already meaningful in 1D ---
arises when the samples $x_1,x_2,\ldots$ are \emph{not} i.i.d.\ but
form a Markov chain with stationary distribution $p$.  This is the
typical output of an MCMC sampler.  The NOL estimator remains
consistent, but its asymptotic variance now includes contributions
from all pairwise correlations $\Cov\{W_i,W_j\}$ across
\emph{non-overlapping} groups as well, not just within them.
More importantly, Proposition~\ref{prop:ge_optimal} relied on the
i.i.d.\ structure: the conditional distribution of $\bx$ given
$U_k(\bx)=u$ is uniform on the level set $\calS(u)$ only when the
components are i.i.d., and this uniformity is what makes the
Rao--Blackwell step work.  For a Markov chain, the joint distribution
of a group $(x_i,\ldots,x_{i+k-1})$ is not a product measure, so the
level sets $\calS(u)$ are no longer traversed uniformly.
Whether the group-energy sufficiency result holds in this setting
is unclear; the Rao--Blackwell argument relied on the product
structure of $p$, which is no longer available.  Extending GRIS
theory to the Markov setting --- characterizing when group-energy
weights remain approximately sufficient, and how to account for
inter-group correlations in the MSE formula --- is an open problem
even in 1D.

In the 2D array setting with spatial interactions (e.g., the 2D Ising
model), both extensions are active simultaneously: the samples are
neither i.i.d.\ nor one-dimensional, and the GRIS theory would need to
address both the richer group geometry of~(i) and the non-product
joint distribution of~(ii) at once.

\section*{Appendix A -- Comments on Forward Importance Sampling}
\renewcommand{\theequation}{A.\arabic{equation}}
    \setcounter{equation}{0}

The baseline estimator studied throughout this paper is a RIS estimator:
samples are drawn from the Boltzmann distribution $p$ and
reweighted by a tractable function $m$.  A natural
question is whether the grouping idea extends to \emph{forward} (direct) IS.
In FIS one draws $n$ i.i.d.\ samples $x_1,\ldots,x_n$ from a
proposal $q$ and estimates $Z(\beta)$ by
\begin{equation}\label{eq:FIS_est}
  \hat{Z}^{\rm fis}(\beta) = \frac{1}{n}\sum_{i=1}^n
    \frac{e^{-\beta U(x_i)}}{q(x_i)},
\end{equation}
which is unbiased since $\bE_q\{e^{-\beta U(x)}/q(x)\}=Z(\beta)$.
The asymptotic MSE constant is $[Q_1^{\rm fwd}(q)-1]$, where
$Q_1^{\rm fwd}(q)=\bE_q\{[e^{-\beta U(x)}/q(x)]^2\}/Z^2(\beta)$
(see, e.g.,~\cite{LiuFang2015,OwenZhou2000,RobertCasella2004}),
minimized when $q(x)\propto e^{-\beta U(x)}$, i.e.\ $q=p$ --- but
then $q$ is the target itself and sampling from it requires knowing
$Z(\beta)$.  The question is whether grouping helps when $q$ is only
approximately optimal.

In grouped forward IS with a joint group-energy proposal
$\tilde q(\bx)=\tilde r(U_k(\bx))/C_k$,
where $C_k=\int_0^\infty \tilde r(u)\Omega_k(u)\mbox{d}u$.
Each group $\bx_j$ is drawn from $\tilde q$ and weighted by
\begin{equation}
  W_j^{\rm fwd} = \frac{e^{-\beta U_k(\bx_j)}}{\tilde q(\bx_j)}
                = C_ke^{-\beta U_k(\bx_j) + [U_k(\bx_j)]^\alpha/(2s)}
                \quad\text{(for }\tilde r(u)=e^{-u^\alpha/(2s)}\text{)},
\end{equation}
with $\bE_{\tilde q}\{W_j^{\rm fwd}\}=Z^k(\beta)$. The asymptotic
MSE constant is $[Q_k^{\rm fwd}(\tilde r)-1]/k$, where
\begin{equation}\label{eq:Qk_fwd}
  Q_k^{\rm fwd}(\tilde r)-1
  = \frac{C_k\cdot\int_0^\infty \mbox{d}u\Omega_k(u)e^{-2\beta u}/\tilde r(u)}
         {Z^{2k}(\beta)} - 1
  = \chi^2\!\left(\tilde r_*\|p_U^k\right),
\end{equation}
an identity structurally identical to~\eqref{eq:chi2_id} for RIS,
where $\tilde r_*$ is the probability measure on $(0,\infty)$ with density
proportional to $\tilde r(u)\Omega_k(u)$ (the forward-IS analogue of $\tilde q$).
The product-form result (Proposition~\ref{prop:product}) carries over
unchanged: product proposals $\tilde q(\bx)=\prod_i q(x_i)$ actually
\emph{worsen} the MSE relative to $k=1$; the gain from non-product
proposals follows the same chi-squared reduction mechanism.

There is a sign reversal in the tail requirement. In RIS, the
integrability of $\tilde m^2(u)e^{\beta u}$ is
guaranteed for any super-exponentially decaying $\tilde m(u)$, which is the
case for all $\alpha>1$ in the GG family. In forward IS, the integrand
$e^{-2\beta u}/\tilde r(u)$ must be integrable, which requires the proposal
$\tilde r$ to dominate $e^{-2\beta u}$ in the tail --- i.e., $\tilde r$
must have \emph{heavier} tails than the squared Boltzmann factor
$e^{-2\beta u}$.  A Gaussian
group-energy proposal $\tilde r(u)=e^{-u^2/(2s)}$ decays much faster
than $e^{-2\beta u}$ for large $u$, making the forward IS MSE much larger
than the RIS MSE at the same $\alpha=2$. Specifically, for $U(x)=|x|$,
$\beta=1$:
\begin{center}
\begin{tabular}{lccc}
\toprule
Method & $k=1$ & $k=2$ & $k=3$ \\
\midrule
RIS, NOL & 0.081 & 0.054 & 0.040 \\
Forward IS, NOL & 2.000 & 2.834 & 3.915 \\
\bottomrule
\end{tabular}
\end{center}
The forward IS MSE \emph{worsens} with $k$ for the Gaussian group-energy
proposal, because the tail mismatch between the proposal and the target
compounds as the group energy grows.  With the right proposal family
(exponential tilt, $\alpha\to 1$), both methods can achieve near-zero MSE ---
but the optimal forward IS proposal is then the Laplace distribution
$q^*(x)\propto e^{-\beta|x|}=p(x)$, i.e., the target itself.  Forward
IS with the optimal proposal reduces to RIS.

\subsection*{A.1 Why the sliding window is available in RIS but not in forward IS}

The sliding window works in RIS for a simple structural reason:
the $n$ samples $x_1,\ldots,x_n$ are drawn from $p$ independently of
the weight function $m$.  The weight is applied as a post-processing step,
and any group $\tbx_i=(x_i,\ldots,x_{i+k-1})$ has the correct $p^{\otimes k}$
marginal distribution regardless of which other groups share its samples.
Unbiasedness holds for every group separately:
\begin{equation}
  \bE_p\!\left\{m(\tbx_i)\,e^{\beta U_k(\tbx_i)}\right\} = \frac{M}{Z^k(\beta)}
  \quad \text{for every } i,
\end{equation}
because the marginal distribution of $G_i$ is $p^{\otimes k}$ no matter
which samples it shares with adjacent groups.  The proposal and the weight
are \emph{decoupled}: samples come from $p$ independently of what $m$ does.

In forward IS this decoupling breaks down.  The group $\tbx_j$ must be drawn
from $\tilde q$ in order for the weight $e^{-\beta U_k(\tbx_j)}/\tilde q(\tbx_j)$ to be
an unbiased estimator of $Z^k(\beta)$.  If instead one draws $x_i$ i.i.d.\ from
the marginal $q_1$ of $\tilde q$ and forms overlapping groups $\tbx_i=
(x_i,\ldots,x_{i+k-1})$, then $\tbx_i\sim q_1^{\otimes k}$, which equals
$\tilde q$ only if $\tilde q$ is a product proposal.  But product proposals actually
\emph{worsen} the MSE relative to $k=1$ (Proposition~\ref{prop:product}).  Thus:
\begin{itemize}
\item A sliding window with a \emph{product} proposal is unbiased but useless.
\item A sliding window with a \emph{non-product} proposal is useful but biased.
\end{itemize}
In one sentence: \emph{in RIS the proposal ($p$) and the weight
($m$) are decoupled, so sharing samples across groups is free; in forward
IS the proposal and the weight are locked together, so sharing samples
breaks unbiasedness.}

This asymmetry is the main reason GRIS is the more
natural and complete theory: whenever samples from $p$ are available,
GRIS is preferable.
Grouped forward IS is nevertheless useful when the $k=1$ proposal $q$
approximates $p$ reasonably but not well: a joint group-energy proposal
$\tilde r(U_k)$ can then better match the joint Boltzmann factor
$e^{-\beta U_k(\bx)}$ across a range of total energies.  The key requirement
is efficient sampling from $\tilde q$: in the group-energy case this
reduces to drawing $U_k$ from $\tilde r(u)\Omega_k(u)/C_k$ and then
sampling $\bx$ uniformly from the level set $\{U_k(\bx)=u\}$, which is tractable
for the energy functions considered here.

\section*{Appendix B -- Comparison with Related Methods}
\renewcommand{\theequation}{B.\arabic{equation}}
    \setcounter{equation}{0}

\paragraph{Methods that also use samples from $p$.}
When samples from $p$ are available, the natural competitors to GRIS
are the standard RIS estimator ($k=1$, the baseline) and bridge
sampling~\cite{MengWong1996}.  Bridge sampling additionally requires a
tractable reference $q$ and estimates $Z(\beta)/Z_q$ from mixed samples
of $p$ and $q$; its MSE is governed by $\chi^2(p\|q)+\chi^2(q\|p)$.
GRIS avoids the need for a reference entirely.  In applications where
bridge sampling is already in use (e.g., Bayesian model comparison),
GRIS can serve as the RIS component of the bridge, further reducing
variance within that framework.

\paragraph{Annealed importance sampling (AIS).}
AIS~\cite{Neal2001} constructs a sequence of intermediate distributions
bridging a tractable reference $p_0$ to the target $p$, and accumulates
incremental importance weights along a Markov chain.  Unlike GRIS, AIS
does not require samples from $p$: it is designed precisely for settings
where direct sampling from $p$ is infeasible.  The two methods therefore
address fundamentally different problems and should not be seen as
competitors.

GRIS is appropriate when samples from $p$ are available (e.g., after
an MCMC chain has converged, or in physical simulations at equilibrium)
but $Z(\beta)$ remains unknown.  In this regime GRIS requires no bridge,
no reference distribution, and no Markov chain mixing: the only inputs
are the samples themselves and a tractable weight family.  AIS, on the
other hand, is the right tool when direct sampling from $p$ is
infeasible and one must instead traverse a path of intermediate
distributions from a tractable reference.

\section*{Appendix C -- Derivations of Asymptotic MSE Constants}
\renewcommand{\theequation}{C.\arabic{equation}}
    \setcounter{equation}{0}

All three MSE formulas are derived from the same second-moment argument.
No asymptotic normality is required; only the variance of the relevant
sample mean is needed.

For any estimator of the form
$\widehat L = \frac{1}{k}[\ln C - \ln\bar W_n]$,
where $C$ is a known constant and $\bar W_n = n^{-1}\sum_{i=1}^n W_i$,
let $\mu_W = \lim_{n\to\infty}\bE[\bar W_n]$ (finite and positive)
so that $L = \frac{1}{k}\ln(C/\mu_W)$ is the true value.
Write $\bar W_n = \mu_W(1+\varepsilon_n)$ where
$\varepsilon_n = (\bar W_n - \mu_W)/\mu_W \to 0$.  Then
\begin{equation}\label{eq:taylor_expand}
  \widehat L - L
  = -\frac{1}{k}\ln(1+\varepsilon_n)
  = -\frac{\varepsilon_n}{k} + O(\varepsilon_n^2).
\end{equation}
Squaring and taking expectations, the higher-order terms contribute
$O(n^{-2})$ to the MSE, giving the master formula:
\begin{equation}\label{eq:mse_master}
  \lim_{n\to\infty} n\cdot\MSE\{\widehat{L}\}
  = \frac{\lim_{n\to\infty} n\cdot\Var\{\bar W_n\}}{k^2\mu_W^2}.
\end{equation}
The three formulas differ only in how $n\cdot\Var\{\bar W_n\}$ is computed.

\paragraph{Finite-sample validity.}
The approximation $n\cdot\MSE\approx V_k$ is accurate when the
higher-order remainder $O(n^{-2})$ is small relative to the leading
term $V_k/n$.  Since the remainder arises from $\bE[\varepsilon_n^2]^2
= [V_k(m)/n]^2 \cdot O(1)$, the relative error of the first-order
approximation is $O(V_k/n)$.  For the numerical examples in
Section~\ref{sec:numerics}, $V_k\le 0.08$ and the approximation is
therefore accurate to within $1\%$ for $n\ge 800$ and to within
$0.1\%$ for $n\ge 8000$, which are modest sample sizes by any
practical standard.  For larger $V_k$ (e.g.\ in the perturbed
setting at large $\epsilon$), larger $n$ may be needed, but the
asymptotic formula remains the correct characterization of the
estimator's quality in the large-$n$ regime.

\paragraph{C.1 Non-overlapping MSE~\eqref{eq:mse_derivation}.}
The $n/k$ group weights $W_j = m(\bx_j)e^{\beta U_k(\bx_j)}$
are i.i.d.\ with mean $\mu_W=M/Z^k(\beta)$ and variance
$\sigma_W^2=\Var\{W_1\}$.  Setting $C=M$:
\begin{equation}
  n\cdot\Var\{\bar W_{n/k}\} = k\,\sigma_W^2,
\end{equation}
so by~\eqref{eq:mse_master}:
\begin{equation}
  V_k^{\rm nol}(m) = \frac{k\sigma_W^2}{k^2\mu_W^2} = \frac{\sigma_W^2}{k\mu_W^2}
  = \frac{Q_k(m)-1}{k},
\end{equation}
using $\sigma_W^2/\mu_W^2 = Q_k(m)-1$.  To verify this identity:
\begin{align}
  \bE_{p^{\otimes k}}[W_1^2]
  &= \int_{\calX^k}[m(\bx)]^2 e^{2\beta U_k(\bx)}
     \cdot\frac{e^{-\beta U_k(\bx)}}{Z^k(\beta)}\,\mbox{d}\bx
   = \frac{\int_{\calX^k}[m(\bx)]^2 e^{\beta U_k(\bx)}\,\mbox{d}\bx}{Z^k(\beta)},
\end{align}
so $\bE[W_1^2]/\mu_W^2 = Z^k(\beta)\int[m]^2e^{\beta U_k}\,\mbox{d}\bx/M^2 = Q_k(m)$,
and $\sigma_W^2/\mu_W^2 = Q_k(m) - 1$. \qed

\paragraph{C.2 Fixed-weight sliding-window MSE~\eqref{eq:mse_sl}.}
Now $W_i = m(\bx_i)e^{\beta U_k(\bx_i)}$ with $\bx_i=(x_i,\ldots,x_{i+k-1})$
and $C=M$.  The sequence $(W_i)$ is strictly stationary
(each $W_i$ depends only on the i.i.d.\ block $(x_i,\ldots,x_{i+k-1})$),
with autocovariances $R_\ell=\Cov\{W_1,W_{1+\ell}\}$.
For $|\ell|\ge k$, the blocks are disjoint and $R_\ell=0$.  Therefore:
\begin{equation}
  n\cdot\Var\{\bar W_n\}
  = \sum_{|\ell|<n}\!\!\!\left(1-\frac{|\ell|}{n}\right)R_\ell
  \;\xrightarrow{n\to\infty}\; R_0 + 2\sum_{\ell=1}^{k-1}R_\ell
\end{equation}
(dominated convergence; the sum is finite since $R_\ell=0$ for $|\ell|\ge k$).
By~\eqref{eq:mse_master} and $R_0=(Q_k(m)-1)\mu_W^2$, $\rho_\ell=R_\ell/R_0$:
\begin{equation}
  V_k^{\rm fsw}(m) = \frac{R_0+2\sum_{\ell=1}^{k-1}R_\ell}{k^2\mu_W^2}
  = \frac{Q_k(m)-1}{k^2}\!\left(1+2\sum_{\ell=1}^{k-1}\rho_\ell\right). \quad\qed
\end{equation}

\paragraph{C.3 Variable-weight sliding window MSE~\eqref{eq:mse_periodic}.}
The weight sequence cycles with period $k$: group $i$ uses weight
$m_{i\bmod k}$, giving $W_i = m_{i\bmod k}(\tbx_i)e^{\beta U_k(\tbx_i)}$.
Set $C = \bar\mu_W Z^k(\beta)$ where $\bar\mu_W = k^{-1}\sum_{l=1}^kM_l/Z^k(\beta)$.
The process $\{W_i\}$ is \emph{cyclostationary} (periodically stationary)
with period $k$: the distribution of $W_i$ depends on $i$ only through
$i\bmod k$, and similarly for all finite-dimensional distributions.
In particular, $\Cov\{W_i,W_{i+\ell}\}$ depends on both $\ell$ and
$i\bmod k$, not on $\ell$ alone. Define the \emph{time-averaged
covariance} at lag $\ell$ as
\begin{equation}
  \bar R_\ell \dfn \frac{1}{k}\sum_{l=1}^{k}
  \Cov\{W_i,W_{i+\ell}\}\big|_{1+((i-1)\bmod k)\,=\,l},
\end{equation}
which vanishes for $\ell\ge k$ since groups that are far apart share no
samples. By the law of large numbers applied to the cyclostationary
sequence:
\begin{equation}
  n\cdot\Var\{\bar W_n\}
  \xrightarrow{n\to\infty}
  \bar R_0 + 2\cdot\sum_{\ell=1}^{k-1}\bar R_\ell,
\end{equation}
since each of the $k$ weights appears exactly once per period.
Substituting into~\eqref{eq:mse_master}:
\begin{equation}
  V_k^{\rm vsw}(m_1,\ldots,m_k) = \frac{1}{k^2\bar\mu_W^2}\left[
    \frac{1}{k}\sum_{l=1}^{k}\Var\{W_l\}
    + 2\cdot\sum_{\ell=1}^{k-1}\bar R_\ell
    \right],
\end{equation}
which is~\eqref{eq:mse_periodic}.
The non-overlapping formula ($\bar R_\ell=0$, groups share no samples)
and the sliding-window formula ($\bar R_\ell=R_\ell$, single weight $m_l=m$
for all $l$) follow immediately as special cases. \qed

\section*{Appendix D -- The Density of States}
\renewcommand{\theequation}{D.\arabic{equation}}
    \setcounter{equation}{0}

\subsection*{D.1 General formula}
\label{app:omega_general}

Let $U:\calX\to[0,\infty)$ be a measurable energy function on a state space
$A\subseteq\reals^d$.  Recall that the density of states
$\Omega_k(u)$ is defined by
\begin{equation}\label{eq:Omega_def_app}
  \Omega_k(u) \;=\; \frac{d}{du}\,\mathrm{vol}\bigl\{\bx\in \calX^k:
  U_k(\bx)\le u\bigr\},
\end{equation}
the derivative of the sub-level-set volume with respect to the energy level $u$.
It is a purely geometric quantity, independent of $\beta$.

\paragraph{The $k=1$ case: co-area formula.}
For $k=1$ and $d=1$ (one-dimensional state space), the level set
$\{x\in \calX: U(x)=u\}$ is a finite collection of points for generic $u$.
At each such point $x_0$, the contribution to $\Omega_1$ by the
co-area formula (or simple change of variables) is $1/|U'(x_0)|$:
\begin{equation}\label{eq:Omega1_coarea}
  \Omega_1(u) \;=\; \sum_{x_0:\,U(x_0)=u} \frac{1}{|U'(x_0)|}.
\end{equation}
For $d>1$, the sum is replaced by an integral over the level set
with $(d-1)$-dimensional surface measure $d\sigma$:
\begin{equation}
  \Omega_1(u)=\int_{\{U(x)=u\}}|\nabla U(x)|^{-1}\,\mbox{d}\sigma(x).
\end{equation}
When $A$ is discrete (see Remark~\ref{rem:discrete}), $\Omega_1(u)$
is simply the number of states with energy $u$, and $\Omega_k(u)$
is its $k$-fold discrete self-convolution.

\paragraph{The $k\ge 2$ case: convolution.}
Since $U_k(\bx)=\sum_{i=1}^k U(x_i)$ is a sum of $k$ independent
copies (under the flat measure on $\calX^k$), the density of states
satisfies the convolution identity
\begin{equation}\label{eq:Omega_conv}
  \Omega_k(u) = (\underbrace{\Omega_1 * \cdots * \Omega_1}_{k})(u)
  = \int_0^u \Omega_{k-1}(t)\Omega_1(u-t)\,\mbox{d}t,
\end{equation}
the $(k-1)$-fold convolution of $\Omega_1$ with itself on $(0,\infty)$.
This follows directly from the representation
$\Omega_k(u) = \int_{\calX^k}\delta(U_k(\bx)-u)\,\mbox{d}\bx$
and the independence of the summands.

The relation~\eqref{eq:Omega_conv} reduces the computation of
$\Omega_k$ to a one-dimensional convolution for any $k$, regardless
of the dimension $d$ of the state space.  Concretely, computing
$\Omega_k$ requires at most $k-1$ successive one-dimensional
integrations --- far cheaper than the $k$-dimensional integration
over $\calX^k$ that a general weight function $m(\bx)$ not of group-energy
form would require.  Moreover, the $k$-fold convolution can be
implemented exactly via just \emph{two} Fourier integrations,
regardless of $k$: compute the Fourier transform
\begin{equation}\label{eq:FT_Omega}
  \hat\Omega_1(\omega) = \int_0^\infty e^{i\omega u}\,\Omega_1(u)\,\mbox{d}u,
\end{equation}
raise pointwise to the $k$-th power to obtain
$[\hat\Omega_1(\omega)]^k = \widehat{\Omega_k}(\omega)$
(by the convolution theorem), and recover $\Omega_k$ by the inverse
Fourier transform
\begin{equation}\label{eq:IFT_Omega}
  \Omega_k(u) = \frac{1}{2\pi}\int_{-\infty}^\infty
  e^{-i\omega u}[\hat\Omega_1(\omega)]^k\,\mbox{d}\omega.
\end{equation}
The cost is thus two integrations plus one pointwise power, independent
of $k$.  When $\Omega_1(u)$ is proportional to a standard density
whose convolutions are in closed form (e.g., the Gamma family),
$\Omega_k$ is explicit and no numerical integration is needed.

As a check, one always has
\begin{equation}\label{eq:Omega_check}
  \int_0^\infty \Omega_k(u)\,e^{-\beta u}\,\mbox{d}u = Z^k(\beta),
\end{equation}
since the left side is $\int_{\calX^k} e^{-\beta U_k(\bx)}\,\mbox{d}\bx
= [Z(\beta)]^k$.

\subsection*{D.2 Saddlepoint approximation for large $k$}
\label{app:saddlepoint}

For large $k$, the exact convolution formula~\eqref{eq:Omega_conv}
can be approximated in closed form via the \emph{saddlepoint method},
which is applicable to any energy function satisfying mild regularity
conditions.

Define the log-Laplace transform of $\Omega_1$ as
\begin{equation}\label{eq:CGF}
  K(\theta) \;\dfn\; \ln\!\int_0^\infty e^{\theta u}\,\Omega_1(u)\,\mbox{d}u,
\end{equation}
defined for $\theta$ in a neighborhood of $0$.
Under the tilted measure on $(0,\infty)$ with density proportional
to $\Omega_1(u)e^{\theta u}$, the mean and variance of $u$ are
$K'(\theta)$ and $K''(\theta)$ respectively.

For a given $u>0$, the \emph{saddlepoint} $\hat\theta=\hat\theta(u)$ is
the solution to
\begin{equation}\label{eq:saddlepoint_eq}
  k\,K'(\hat\theta) = u,
\end{equation}
i.e., the tilt that makes the mean of $k$ copies equal $u$.
The saddlepoint approximation to $\Omega_k(u)$ is then
\begin{equation}\label{eq:saddlepoint}
  \Omega_k(u) \;\approx\;
  \sqrt{\frac{k}{2\pi K''(\hat\theta)}}\;
  \exp\!\bigl[k\,K(\hat\theta) - \hat\theta\,u\bigr].
\end{equation}
This approximation is accurate to relative error $O(1/k)$ uniformly
in $u$, far better than the $O(1/\sqrt{k})$ accuracy of the Gaussian
(CLT) approximation.  In particular, it captures the correct
exponential tail behavior of $\Omega_k(u)$ for $u$ far from its mode
$k K'(0)$, where the CLT fails.

\subsection*{D.3 Special cases}
\label{app:omega_examples}

We derive $\Omega_k(u)$ for the three energy functions used in
Section~\ref{sec:numerics}, as special cases of the general
formulae~\eqref{eq:Omega1_coarea} and~\eqref{eq:Omega_conv}.

\paragraph{Example 1: $U(x)=|x|$, $\calX=\reals$.}
The level set $\{|x|=u\}=\{-u,u\}$ has two points, each with
$|U'(x_0)|=1$, so $\Omega_1(u)=2$.
Since $\Omega_1$ is constant, $\Omega_k$ is the $k$-fold convolution
of the constant function $2$ with itself on $(0,\infty)$.
By the general power-law formula~\eqref{eq:Omega_k_powerlaw}
with $\gamma=1$ (or by the self-contained simplex derivation
in Section~A.4 below):
\begin{equation}\label{eq:Omega_k_exp}
  \Omega_k(u) = \frac{2^k\,u^{k-1}}{(k-1)!},\quad u>0.
\end{equation}
Check: $\int_0^\infty \Omega_k(u)\,e^{-u}\,\mbox{d}u = 2^k = [Z(\beta{=}1)]^k$.\quad
(A self-contained derivation of~\eqref{eq:Omega_k_exp} via the simplex
volume argument is given at the end of this appendix.)

\paragraph{Example 2: $U(x)=|x|^{3/2}$, $\calX=\reals$.}
The change of variables $u=|x|^{3/2}$ gives $|x|=u^{2/3}$ and
$|dx|=\tfrac{2}{3}u^{-1/3}\,\mbox{d}u$, so by~\eqref{eq:Omega1_coarea}:
\begin{equation}
  \Omega_1(u) = 2\cdot\frac{1}{|U'(x_0)|}\bigg|_{x_0=u^{2/3}}
  = 2\cdot\frac{1}{\tfrac{3}{2}u^{1/3}} = \tfrac{4}{3}\,u^{-1/3}
  = \tfrac{4}{3}\,u^{2/3-1},\quad u>0.
\end{equation}
Since $\Omega_1(u)=C_1\,u^{a-1}$ with $C_1=\tfrac{4}{3}$ and $a=2/3$,
the convolution~\eqref{eq:Omega_conv} yields $\Omega_k$ proportional
to the $\mathrm{Gamma}(ka,1)$ density:
\begin{equation}\label{eq:Omega_k_stretch}
  \Omega_k(u) = \frac{[C_1\,\Gamma(a)]^k}{\Gamma(ka)}\,u^{ka-1}
  = \frac{\bigl[\tfrac{4}{3}\Gamma\!\left(\tfrac{2}{3}\right)\bigr]^k}
         {\Gamma(2k/3)}\,u^{2k/3-1},\quad u>0.
\end{equation}
Check: $\int_0^\infty \Omega_k(u)e^{-u}\,\mbox{d}u
= [\tfrac{4}{3}\Gamma(\tfrac{2}{3})]^k = [Z(\beta{=}1)]^k$.

\paragraph{Example 3: $U(x)=(x^2-1)^2$, $\calX=\reals$.}
The level set $\{(x^2-1)^2=u\}$ has up to four points: setting
$x^2=1\pm\sqrt{u}$ gives $x=\pm\sqrt{1+\sqrt{u}}$ (always real) and
$x=\pm\sqrt{1-\sqrt{u}}$ (real only for $0<u<1$).  With
$U'(x)=4x(x^2-1)$, formula~\eqref{eq:Omega1_coarea} gives
\begin{equation}
  \Omega_1(u) = \sum_{x_0:\,U(x_0)=u}\frac{1}{|4x_0(x_0^2-1)|},
\end{equation}
which has no closed form.  For $k\ge 2$, $\Omega_k$ is computed by
$(k-1)$ successive applications of~\eqref{eq:Omega_conv} using
numerical integration or fast Fourier transform (FFT) convolution, each a one-dimensional
operation.  One verifies $\int_0^\infty\Omega_1(u)e^{-u}\,\mbox{d}u
= Z(\beta{=}1)\approx 1.974$.

For large $k$, the saddlepoint approximation~\eqref{eq:saddlepoint}
applies with $K(\theta)=\ln\int_0^\infty e^{\theta u}\Omega_1(u)\,\mbox{d}u$,
which is evaluated numerically.

\paragraph{General power-law energy $U(x)=|x|^\gamma$, $\calX=\reals$.}
For any $\gamma>0$, the same argument as Example~2 gives
$\Omega_1(u)=(2/\gamma)\,u^{1/\gamma-1}$, and the convolution formula
yields
\begin{equation}\label{eq:Omega_k_powerlaw}
  \Omega_k(u) = \frac{[Z(\beta{=}1)]^k}{\Gamma(k/\gamma)}\,u^{k/\gamma-1},
  \quad u>0,
\end{equation}
where $Z(\beta{=}1)=(2/\gamma)\Gamma(1/\gamma)$.  The group energy
$U_k(\bx)$ follows a $\mathrm{Gamma}(k/\gamma, 1)$ distribution under
$p^{\otimes k}$.  Examples~1 and~2 correspond to $\gamma=1$ and
$\gamma=3/2$ respectively.

\subsection*{D.4\quad Simplex volume derivation for Example~1}

For completeness we give a self-contained derivation of~\eqref{eq:Omega_k_exp}
not relying on the convolution formula.

Set $V_i=|x_i|$ for each $i$ and sum over all $2^k$ sign
combinations of $(x_1,\ldots,x_k)$:
\begin{equation}
  \Omega_k(u) = 2^k\,\frac{d}{du}\,\mathrm{vol}\bigl\{(V_1,\ldots,V_k)
  \in[0,\infty)^k:\textstyle\sum_{i=1}^k V_i\le u\bigr\}.
\end{equation}
The change of variables $W_i=\sum_{j=1}^i V_j$ maps
$\{V_i\ge 0,\,\sum_i V_i\le u\}$ bijectively to
$\{0\le W_1\le\cdots\le W_k\le u\}$ with Jacobian $1$.
The latter is one of the $k!$ congruent chambers tiling $[0,u]^k$,
so its volume is $u^k/k!$.  Differentiating:
$\mathrm{vol}_{k-1}(\{\sum_{i=1}^k V_i=u\})=u^{k-1}/(k-1)!$, giving
$\Omega_k(u)=2^k u^{k-1}/(k-1)!$.\qed

\end{document}